\newtheorem{theorem}{Theorem}[section]
\newtheorem{qtheorem}{QTheorem}[section]
\newtheorem{definition}[theorem]{Definition}
\newtheorem{lemma}[theorem]{Lemma}
\newtheorem{remark}[theorem]{Remark}
\newtheorem{cor}[theorem]{Corollary}
\newtheorem{prop}[theorem]{Proposition}
\newcommand{\bt}{\begin{theorem}}
 \newcommand{\et}{\end{theorem}}
\newcommand{\bqt}{\begin{qtheorem}}
 \newcommand{\eqt}{\end{qtheorem}}
\newcommand{\bp}{\begin{pro}}
 \newcommand{\ep}{\end{pro}}
\newcommand{\bl}{\begin{lemma}}
 \newcommand{\el}{\end{lemma}}
\newcommand{\bc}{\begin{cor}}
 \newcommand{\ec}{\end{cor}}
\newcommand{\norm}[2]{\left\Vert{#1}\right\Vert_{#2}}% norm
\newcommand{\R}{\mathbb{R}}
\newcommand{\Z}{\mathbb{Z}}
\newcommand{\C}{\mathbb{C}}
\newcommand{\cB}{{\cal B}}
\newcommand{\al}{\alpha}
\newcommand{\de}{\delta}
\newcommand{\ve}{\varepsilon}
\newcommand{\om}{\omega}
\newcommand{\si}{\sigma}
\newcommand{\Si}{\Sigma}
\newcommand{\ds}{\displaystyle}
\newcommand{\la}{\label}
\newcommand{\beqn}{\begin{eqnarray}}
\newcommand{\eeqn}{\end{eqnarray}}
\newcommand{\be}{\begin{equation}}
\newcommand{\ee}{\end{equation}}
\newcommand{\pa}{\partial}
\newcommand{\fr}{\frac}
\newcommand{\re}{\ref}
\newcommand{\ti}{\tilde}
\renewcommand{\Re}{\mathop{\mathrm{Re}}} % Real      part of complex value
\renewcommand{\Im}{\mathop{\mathrm{Im}}} % Imaginary part of complex value
\DeclareMathOperator{\const}{const}
\DeclareMathOperator{\supp}{supp}
\title{\footnotesize{\textit{}}\hfill\hbox{}
\newline\hbox{}\newline\Large{On Dispersive Estimates for  Discrete\\
Schr\"odinger and Klein-Gordon Equations}}
\begin{document}
\date{}
%------------------------------------------------------------------------------------------
\maketitle
\begin{center}
{\large E.~A.~Kopylova} \footnote{Supported partly by  FWF, DFG
and RFBR grants}\\
{\it Institute for Information Transmission Problems RAS\\
B.Karetnyi 19, Moscow 101447,GSP-4, Russia}\\
e-mail:~ek@vpti.vladimir.ru
\end{center}

\begin{abstract}
{\em We derive the long-time asymptotics for solutions of the
discrete 3D Schr\"odinger and Klein-Gordon equations.}
\end{abstract}

{\em Keywords}: discrete Schr\"odinger and Klein-Gordon equations,
lattice, Cauchy problem, long-time asymptotics, weighed norms.

{\em 2000 Mathematics Subject Classification}: 39A11, 35L10.

\section{Introduction}
%In this paper we establish the long-time behaviour of the solutions
%to the discrete three dimensional Schr\"odinger and Klein-Gordon
%equations. We extend a general strategy introduced by Vainberg
%\cite{V}, Jensen-Kato \cite{jeka} and Murata \cite{M} which
%concerns the  wave, Klein-Gordon and  Schr\"odinger equations, to
%the discrete case. Namely, we establish the smoothness in the continuous
%spectrum and the Puiseux expansion for
%the resolvent of the generator. Then the long-time
%asymptotics can be obtained by means of the (inverse)
%Fourier-Laplace transform.
%We restrict ourselves to a ``nonsingular case'', in the sense of
%\cite{M}, where the truncated resolvent is bounded at the
%ends of the continuous spectrum. This holds for a generic
%potential and allows us to get the decay of order $\sim t^{-3/2}$
%which is desirable for applications to scattering problems.

We consider the 3D discrete version of the Schr\"odinger equation,
\begin{equation} \label{Schr}
   \left\{\begin{array}{l}
   i\dot\psi(x,t)=H\psi(x,t):=(-\Delta  + V(x))\,\psi(x,t)
   \\\\
   \psi\bigr|_{t=0} = \psi_0
\end{array}\right|\quad x\in\Z^3,\quad t\in\R,
\end{equation}
where $\Delta$ stands for the difference Laplacian in $\Z^3$, defined by
\begin{equation*}
    \Delta \psi(x)=\sum\limits_{|y-x|=1}\psi(y)-6\psi(x),\quad x\in\Z^3,
    \quad \psi:\Z^3\to\C
\end{equation*}
Denote by ${\cal V}$ the set of real valued functions on the
lattice $\Z^3$ with finite supports. For the potential $V$, we
assume that ${V\in\cal V}$. Then for $\psi_0\in\l^2=\l^2(\Z^3)$ there exists
the unique solution $\psi(x,t)\in C(\R,l^2)$ to Cauchy  problem (\re{Schr}), and
the charge $\Vert\psi(\cdot,t)\Vert_{l^2}=\const$ is conserved.

It is well known that with the help of  the Fourier-Laplace transform in respect to
variable $t$ one can deduce the properties of nonstationary equation from the
properties of resolvent $R(\omega)=( H-\omega)^{-1}$ of the
Schr\"odinger operator $H$.

We are going to use the weighted Hilbert spaces $l^2_{\sigma}=l^2_{\sigma}(\Z^3)$ with the
norms
\begin{equation*}
   \norm{u}{l^2_{\sigma}}=\norm{(1+x^2)^{\sigma/2}u}{l^2},\quad\sigma\in\R.
\end{equation*}
Let us denote $$B(\sigma,\,\sigma')={\cal L}(l^2_{\sigma},\, l^2_{\sigma'})$$
the space of bounded linear operators from $l^2_{\sigma}$ to $l^2_{\sigma'}$.
\smallskip

The  spectrum of the operator $H$ consists of the continuous spectrum
and of the  real eigenvalues  $\mu_j$, $j=1,...,n$. Note that
$n\le N$, where $N$ is the number of points in the support of $V$
(see. \cite {Gl}, Theorem $13^{bis}$, Chapter I).

Note that the continuous spectrum of the operator $H$ coincides
with the interval $[0,12]$, which is the range of
the symbol $\phi(\theta)=4(\sin^2\frac{\theta_1}2+\sin^2\frac{\theta_2}2+\sin^2\frac{\theta_3}2)$
of the difference Laplace operator $H_0=-\Delta$.

We give special attention to the points $\om_k=4k\in [0,12]$, $k=0,1,2,3$,
which are critical values of the symbol, i.e. the values of the symbol in the
critical points.

Our main results are as follows. For ``a generic potential'' $V\in {\cal V}$
(see Definition \ref{gener-def}), we obtain\\
a) the existence of the limits $R(\om\pm i0)$ (``limiting absorption principle'')
on the continuous spectrum  in the norm of $B(\sigma;\,-\sigma)$ with $\sigma>3/2$;\\
b) the Puiseux expansion for the resolvent at the singular
spectral points $\om_k$:
\begin{equation}\label{P}
  R(\om_k+\om)= D_{k}+{\cal O}(\sqrt\om),\quad\om\to 0,
\end{equation}
in the norm of $B(\sigma;\,-\sigma)$ with $\sigma>7/2$.

Then for initial data $\psi_0\in l^2_{\sigma}$ with $\sigma>11/2$
we obtain the following long-time asymptotics:
\begin{equation}\label{full1}
  \Vert e^{-itH}-\sum\limits_{j=1}^n e^{-it\mu_j}P_j\Vert_{{\cal B}(\sigma,-\sigma)}
  ={\cal O}(t^{-3/2}),\quad t\to\infty.
\end{equation}
Here $ P_j$ are the orthogonal projections
in $l^2$ onto the eigenspaces of $H$,
corresponding to the discrete eigenvalues $\mu_j$.

We also obtain similar results for the discrete Klein-Gordon equation:
\begin{equation} \label{KGE}
   \left\{\begin{array}{l}\ddot\psi(x,t)=(\Delta -m^2 - V(x))\,\psi(x,t)
   \\\\
   \psi\bigr|_{t=0} = \psi_0,\;\dot \psi\bigr|_{t=0} = \pi_0
   \end{array}\right|\quad x\in\Z^3,\quad t\in\R.
\end{equation}
Let us comment on previous results in this direction. For the first time the  difference
Schr\"odinger equation was considered by Eskina  \cite{E}. She proved the
limiting absorption principle for matrix elements of the resolvent.
The asymptotic expansion of the matrix element of the resolvent
$R(\omega)$ at the critical points $\om_k$ was obtained by Islami and
Vainberg \cite{IV} in 2D case. They used this expansion to prove the
long time asymptotics for the solutions of the Cauchy problem for
the difference wave equation. The main feature which differs the
present paper from \cite{IV} is that here all asymptotic
expansions hold in the weighted functional spaces $l^2_{\sigma}$, not on compacts.
Such expansions are desirable for the study of nonlinear evolutionary equations.

The asymptotic expansion of the resolvent and the long time
asymptotics \eqref{full1} for hyperbolic PDEs in $\R^n$
(continuous case) were obtained earlier in \cite{LP},
\cite{V}, \cite{V2}, and for the Schr\"odinger equation in
\cite{jeka}, \cite{jene}, \cite{M}; also see
\cite{schlag} for an up-to-date review.
We use the main ideas of the papers.

The results of present paper extend the results of \cite{kkk} and \cite{kkv}
from difference 1D and 2D equations to  difference 3D equations.
In 3D case the analytical problems is more difficult because
of several type of critical points.

The paper is organized as follows. In \S\ref{free-sect} we
prove the limiting absorption principle and derive the Puiseux asymptotic
of the free resolvent. In \S\ref{lap-pert} we extend the results
to perturbed resolvent.
In  \S\ref{lt-as} we prove the long-time asymptotics \eqref{full1}.
In  \S\ref{KG-sect} we consider the discrete Klein-Gordon equation.
In Appendix C we apply the  obtained results to construct  asymptotic scattering states.

%%%%%%%%%%%%%%%%%%%%%%%%%%%%%%%%%%%%%%%%%%%%%%%%%%%%%%%%%%%%%%%%%%%%%%%%%%%%%%%%%%%%%
%%%%%%%%%%%%%%%%%%%%%%%%%%%%%%%%%%%%%%%%%%%%%%%%%%%%%%%%%%%%%%%%%%%%%%%%%%%%%%%%%%%%
\setcounter{equation}{0}
\section{Free resolvent}
\label{free-sect}
%%%%%%%%%%%%%%%%%%%%%%%%%%%%%%%%%%%%%%%%%%%%%%%%%%%%%%%%%%%%%%%%%%%%%%%%%%%%%%%%%%%%
%%%%%%%%%%%%%%%%%%%%%%%%%%%%%%%%%%%%%%%%%%%%%%%%%%%%%%%%%%%%%%%%%%%%%%%%%%%%%%%%%%%%
We start with an investigation of the unperturbed problem for the
equation  \eqref {Schr} with $V(x)=0$. The discrete Fourier
transform of $u(x)\in l^2(\Z^3)$ is defined by the formula
\begin{equation*}
    \widehat u(\theta)=\sum_{x\in\Z^3}u(x)e^{i\theta x},
    \;\theta\in T^3:=\R^3/2\pi \Z^3,
\end{equation*}
After taking the Fourier transform, the operator $H_0=-\Delta$ becomes the
operator of multiplication by
$\phi(\theta):=6-2\sum\limits_{j=1}^{3}\cos \theta_j
=4\sum\limits_{j=1}^{3}\sin^2\frac{\theta_j}2$:
\begin{equation}\label{FTDelta}
    -\widehat{\Delta u}(\theta)=
     \phi(\theta)\widehat u(\theta),\quad\theta\in T^3.
\end{equation}
Thus,   the spectrum of the operator $H_0$
coincides with the range of the function $\phi$, that is ${\rm Spec}H_0=\Si:=[0,12]$.
Denote by $R_0(\omega)=(H_0-\omega)^{-1}$ the
resolvent of the difference Laplacian. Then the  kernel of the resolvent
$R_0(\omega)$ reads
\begin{equation}\label{R0}
   R_0(\omega,x-y)= \frac 1{8\pi^3}\int\limits_{T^3}
   \frac{e^{-i\theta(x-y)}}{\phi(\theta)-\omega}~d\theta,\;
   \omega\in\C\setminus \Si.
\end{equation}
%%%%%%%%%%%%%%%%%%%%%%%%%%%%%%%%%%%%%%%%%%%%%%%%%%%%%%%%%%%%%%%%%%%%%%%%%%
\begin{lemma}\label{PV}
 The free resolvent $R_0(\omega)$  is an analytic function
 of $\omega\in\C\setminus \Si$
 with the values in $\cB(\si,\si')$ for any $\si,\si'\in\R$.
\end{lemma}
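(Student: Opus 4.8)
The plan is to argue directly from the explicit kernel \eqref{R0}, exploiting that off the spectrum the symbol of $R_0(\om)$ is smooth on the torus, which forces its Fourier coefficients, i.e. the kernel, to decay faster than any power of $|x-y|$. First I would record the elementary observation that $\phi(T^3)=\Si=[0,12]$, so for $\om\in\C\setminus\Si$ the denominator is uniformly bounded away from zero,
\be
  |\phi(\theta)-\om|\ge d(\om):=\mathrm{dist}(\om,\Si)>0,\quad \theta\in T^3.
\ee
Hence the symbol $g_\om(\theta):=(\phi(\theta)-\om)^{-1}$ is a well-defined $C^\infty$ (in fact real-analytic) function on $T^3$, and $R_0(\om,z)$ with $z=x-y$ is exactly its Fourier coefficient.

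From the smoothness of $g_\om$ I would derive rapid decay of the kernel by integrating by parts in \eqref{R0}. Using $z^\alpha e^{-i\theta z}=(i\pa_\theta)^\alpha e^{-i\theta z}$ and the absence of boundary terms on $T^3$, one gets $|z^\alpha R_0(\om,z)|\le (8\pi^3)^{-1}\int_{T^3}|\pa_\theta^\alpha g_\om|\,d\theta$, and summing over $|\alpha|\le N$,
\be
  |R_0(\om,z)|\le C_N(\om)\,(1+|z|)^{-N},\quad z\in\Z^3,
\ee
for every $N$, where $C_N(\om)$ is controlled by negative powers of $d(\om)$ and is therefore locally bounded on $\C\setminus\Si$.

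To turn this kernel bound into boundedness in $\cB(\si,\si')$ I would apply a weighted Schur test. Setting $\ti K(x,y)=(1+|x|^2)^{\si'/2}R_0(\om,x-y)(1+|y|^2)^{-\si/2}$, I use Peetre's inequality $(1+|x|^2)^{s}\le 2^{|s|}(1+|x-y|^2)^{|s|}(1+|y|^2)^{s}$ to absorb both weights into powers of $(1+|x-y|)$; the rapid decay then lets me pick $N$ large enough that the row and column sums $\sup_x\sum_y|\ti K(x,y)|$ and $\sup_y\sum_x|\ti K(x,y)|$ are finite, and Schur's lemma yields $R_0(\om)\in\cB(\si,\si')$. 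I expect this passage to the weighted spaces to be the main obstacle: while $l^2\to l^2$ boundedness is immediate from spectral theory, reaching arbitrary $\si,\si'$ genuinely needs the combination of off-diagonal decay, Peetre's inequality and the Schur test, with the constants kept uniform in $\om$.

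For analyticity in $\om$ I would differentiate under the integral sign. Since $\pa_\om g_\om=(\phi-\om)^{-2}$ is again smooth on $T^3$, the candidate derivative has kernel $(8\pi^3)^{-1}\int_{T^3}e^{-i\theta z}(\phi(\theta)-\om)^{-2}\,d\theta$, which satisfies the same rapid-decay and weighted estimates. Applying the estimates of the previous step to $(\phi-\om-h)^{-1}-(\phi-\om)^{-1}-h(\phi-\om)^{-2}$, and using that the constants $C_N$ are uniform for $\om$ in a small disc about a fixed point of $\C\setminus\Si$, the difference quotient converges to this kernel in the $\cB(\si,\si')$-norm. This establishes complex differentiability, and hence analyticity, completing the argument.
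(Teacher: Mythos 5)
Your proposal is correct and follows the same overall skeleton as the paper (pointwise kernel decay plus a Schur-type test), but the mechanism you use for the decay is genuinely different. The paper observes that $\phi(\theta+i\xi)-\om\neq 0$ for small $|\xi|$, so the symbol $1/(\phi(\theta)-\om)$ continues analytically to a complex neighbourhood of the torus, and then invokes Paley--Wiener to get \emph{exponential} decay $|R_0(\om,x-y)|\le C(\de)e^{-\de|x-y|}$; you stay on the real torus and trade smoothness for decay by integration by parts, getting only polynomial decay $|R_0(\om,z)|\le C_N(\om)(1+|z|)^{-N}$, but for every $N$. Since the weights defining $l^2_\si$ are polynomial, your weaker bound is entirely sufficient, and your route is more elementary (no complex deformation, no Paley--Wiener); the paper's bound is stronger but that strength is never used. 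You also do something the paper omits: an explicit proof of analyticity in $\om$ via the difference-quotient identity for $(\phi-\om-h)^{-1}-(\phi-\om)^{-1}-h(\phi-\om)^{-2}$ with constants uniform on small discs, whereas the paper's proof only establishes membership in $\cB(\si,\si')$ and leaves the $\om$-analyticity implicit. One caveat applies equally to your argument and to the paper's: the Schur-test step (with or without Peetre's inequality) only closes when $\si'\le\si$, and in fact the conclusion is false for $\si'>\si$ --- testing $R_0(\om)$ on Kronecker deltas $\delta_{x_0}$ and using $R_0(\om,0)\neq 0$ (e.g.\ for large negative real $\om$ the integrand in \eqref{R0} is positive) shows the operator norm from $l^2_\si$ to $l^2_{\si'}$ blows up as $|x_0|\to\infty$. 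So your claim that the row and column sums of $\ti K$ can always be made finite is not literally true for arbitrary $\si,\si'$; this is a defect of the lemma's statement itself rather than of your proof relative to the paper's, and in the applications only $\si'\le\si$ (indeed $\si'=-\si$, $\si>0$) is ever used.
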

\begin{proof}
  For a fixed $\omega\in\C\setminus \Si$, we have $\phi(\theta)-\om \ne 0$ for
  $\theta\in T^3$. Therefore, $\phi(\theta+i\xi)-\om \ne 0$ for
  $\theta\in T^3$, $\xi\in\R^3$,  if $\xi\not =0$  is sufficiently
small. Hence, the function $1/(\phi(\theta)-\om)$ admits analytic
continuation into a complex neighbourhood of the torus of type
  $\{\theta+i\xi: \theta\in T^3,\;\xi\in\R^3:|\xi|<\de(\om)\}$ with an
  $\de(\om)>0$.  Therefore the Paley-Wiener arguments imply that
  \[
    |R_0(\omega,x-y)|\le C(\delta)e^{-\delta|x-y|}
  \]
  for any $\de<\de(\om)$.
  Hence,  $R_0(\omega)\in \cB(\si,\si')$ by the Schur lemma.
\end{proof}
%%%%%%%%%%%%%%%%%%%%%%%%%%%%%%%%%%%%%%%%%%%%%%%%%%%%%%%%%%%%%%%%%%%%%%%%%%%%%%%%%%
%%%%%%%%%%%%%%%%%%%%%%%%%%%%%%%%%%%%%%%%%%%%%%%%%%%%%%%%%%%%%%%%%%%%%%%%%%%%%%%%%%%
\subsection{Limiting absorption principle}
\label{lim-ab-pr}
%%%%%%%%%%%%%%%%%%%%%%%%%%%%%%%%%%%%%%%%%%%%%%%%%%%%%%%%%%%%%%%%%%%%%%%%%%%%%%%%%
%%%%%%%%%%%%%%%%%%%%%%%%%%%%%%%%%%%%%%%%%%%%%%%%%%%%%%%%%%%%%%%%%%%%%%%%%%%%%%%%
Now we are interested in the traces of the analytic function $R_0(\om)$  at the cut
$\Si$.  Consider
\be\la{RF}
   R_0(\om\pm i\ve,z)=\frac 1{8\pi^3}\int\limits_{T^3}
    \frac{e^{-i\theta z}}{\phi(\theta)-\omega\mp i\ve}~d\theta,
    \quad z\in\Z^3,\quad\om\in\Si,\quad\ve>0.
\ee
Note that the limiting distribution
$\ds\fr 1{\phi(\theta)-\om\mp i 0}$ is well defined if
$\om$  is not a critical value of the function $\phi(\theta)$, i.e.
$\om\not=0,4,8,12$.
The following limiting absorption principle holds:
%%%%%%%%%%%%%%%%%%%%%%%%%%%%%%%%%%%%%%%%%%%%%%%%%%%%%%%%%%%%%%%%%%%%%%%%%%%%%%%%%
\begin{prop}\label{LAP}
  For $\sigma>3/2+k$ the following limits exist as $\varepsilon\to 0+$:
  \be\label{lim}
    \pa_\om^kR_0(\omega\pm i\varepsilon)\;\buildrel {\hspace{2mm}\mathcal
    B(\sigma,-\sigma)}\over
    {- \hspace{-2mm} \longrightarrow}\;\pa_\om^k R_0(\omega\pm i0),\quad
    \omega\in \Si\setminus \{0,4,8,12\}.
  \ee
\end{prop}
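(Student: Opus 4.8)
The plan is to reduce the three-dimensional oscillatory integral \eqref{RF} to a one-dimensional Cauchy-type integral in the spectral variable by means of the coarea formula, and then to invoke a limiting-absorption (Privalov--Sokhotski--Plemelj) argument for operator-valued Cauchy integrals. Write $\Sigma_\lambda=\{\theta\in T^3:\phi(\theta)=\lambda\}$ for the level set of the symbol. Since $\omega\in\Sigma\setminus\{0,4,8,12\}$ is not a critical value, $\nabla\phi\neq 0$ on $\Sigma_\omega$, so for $\lambda$ in a small neighbourhood $(\omega-\rho,\omega+\rho)$ the sets $\Sigma_\lambda$ form a smooth family of compact hypersurfaces. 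The coarea formula then gives
\be\label{coarea}
  R_0(\omega\pm i\varepsilon,z)=\frac1{8\pi^3}\int_0^{12}\frac{g_z(\lambda)}{\lambda-\omega\mp i\varepsilon}\,d\lambda,
  \qquad
  g_z(\lambda)=\int_{\Sigma_\lambda}\frac{e^{-i\theta z}}{|\nabla\phi(\theta)|}\,dS(\theta).
\ee
Denote by $g(\lambda)$ the convolution operator with kernel $g_z(\lambda)$; it is the density $dE_0/d\lambda$ of the spectral family of $H_0$, and the whole problem reduces to understanding this operator-valued function near $\lambda=\omega$.

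First I would insert a cutoff $\chi(\lambda)$ equal to $1$ near $\lambda=\omega$ and supported in $(\omega-\rho,\omega+\rho)$, with $\rho$ small enough to exclude the critical values $0,4,8,12$. On $\supp(1-\chi)$ the denominator $\lambda-\omega\mp i\varepsilon$ stays bounded away from $0$; the only singularities of $g(\lambda)$ there sit at the critical values, where $|\nabla\phi|^{-1}$ produces at worst an integrable singularity, so this part of \eqref{coarea} converges in the norm of $\cB(\sigma,-\sigma)$ as $\varepsilon\to 0$ by dominated convergence. Thus the genuine difficulty is the contribution of the smooth, compactly supported family $\lambda\mapsto\chi(\lambda)g(\lambda)$ near $\lambda=\omega$.

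The key analytic input is the regularity of $\lambda\mapsto g(\lambda)$ as a $\cB(\sigma,-\sigma)$-valued function. Boundedness $g(\lambda)\colon l^2_\sigma\to l^2_{-\sigma}$ is a trace statement: for $u\in l^2_\sigma$ one has $\widehat u\in H^\sigma(T^3)$, the restriction $\widehat u\mapsto\widehat u|_{\Sigma_\lambda}$ is bounded into $L^2(\Sigma_\lambda)$ for $\sigma>1/2$, and $(g(\lambda)u,v)=\int_{\Sigma_\lambda}\widehat u\,\overline{\widehat v}\,|\nabla\phi|^{-1}\,dS$. Differentiating in $\lambda$ straightens $\phi$ to a normal coordinate and brings down normal derivatives of $\widehat u$ on $\Sigma_\lambda$; the $j$-th normal-derivative trace $H^\sigma(T^3)\to L^2(\Sigma_\lambda)$ is bounded precisely for $\sigma>j+1/2$. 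To treat $\partial_\omega^k$ I would integrate by parts $k$ times in $\lambda$, transferring the $k$ derivatives of the Cauchy kernel onto $g$; the resulting Cauchy integral of $\partial_\lambda^k g$ converges in operator norm once $\partial_\lambda^k g$ is Lipschitz, i.e.\ once $g$ has a bounded $(k+1)$-st $\lambda$-derivative in $\cB(\sigma,-\sigma)$. By the trace count this requires normal traces up to order $k+1$, hence $\sigma>(k+1)+1/2=3/2+k$, which is exactly the hypothesis.

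With $g\in C^{k+1}\bigl((\omega-\rho,\omega+\rho),\cB(\sigma,-\sigma)\bigr)$ in hand, the conclusion follows from the operator-valued Privalov theorem: for a family $F(\lambda)$ with Lipschitz dependence, the Cauchy integral $\int F(\lambda)(\lambda-\omega\mp i\varepsilon)^{-1}d\lambda$ converges in norm to $\mathrm{P.V.}\int F(\lambda)(\lambda-\omega)^{-1}d\lambda\pm i\pi F(\omega)$. Applying this to $F=\partial_\lambda^k g$ after the integration by parts yields
\be\label{concl}
  \partial_\omega^k R_0(\omega\pm i\varepsilon)\;\longrightarrow\;\partial_\omega^k R_0(\omega\pm i0)
  \qquad\text{in }\cB(\sigma,-\sigma).
\ee
I expect the main obstacle to be the trace estimates themselves: establishing that the normal-derivative traces on the \emph{moving} surfaces $\Sigma_\lambda$ are bounded uniformly in $\lambda$ near $\omega$, at the sharp weight threshold, and checking that the geometric factor $|\nabla\phi|^{-1}$ and the curvature terms generated by the $\lambda$-differentiation do not degrade the count. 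This is precisely where the non-degeneracy $\nabla\phi\neq 0$ away from the critical values is essential, and it is exactly the step that must be redone, with considerably more care, near the points $\omega_k=4k$ in the Puiseux analysis of part~(b).
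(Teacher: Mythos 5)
Your proposal is sound and reaches the stated threshold $\sigma>3/2+k$, but it follows a genuinely different route from the paper. The paper works at the level of the kernel $R_0(\omega\pm i\varepsilon,z)$: it takes a partition of unity $\chi_j$ on $T^3$, straightens the level set $\{\phi=\omega\}$ in local coordinates $(s,t)$, and proves a one-dimensional Cauchy-integral lemma (Lemma \ref{SP}) giving pointwise convergence together with the uniform bound $|R_0(\omega\pm i\varepsilon,z)|\le C(\ln(1+|z|)+1)$; the operator convergence \eqref{lim} is then obtained by dominated convergence in the \emph{Hilbert--Schmidt} norm, so that $\sigma>3/2$ enters purely through summability of $(1+|x|^2)^{-\sigma}$ over $\Z^3$, and the case $k\ge1$ is handled by integration by parts in $\theta$ (using $\partial_i\phi\neq0$), each derivative costing a factor $z_i$ and hence one unit of $\sigma$. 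You instead pass by the coarea formula to an operator-valued Cauchy integral of the spectral density $g(\lambda)$, derive its $C^{k+1}$ regularity in ${\cal B}(\sigma,-\sigma)$ from Sobolev trace theorems on the level surfaces $\Sigma_\lambda$ (the threshold $\sigma>j+1/2$ for the $j$-th normal trace reproduces exactly the hypothesis), integrate by parts in $\lambda$ rather than in $\theta$, and conclude by operator-valued Sokhotski--Plemelj/Privalov. Both arguments rest on the same geometric fact $\nabla\phi\neq0$ away from $\{0,4,8,12\}$, and your weight bookkeeping is consistent (note that boundedness of the weak $(k+1)$-st $\lambda$-derivative of $g$ does give norm-Lipschitz continuity of $\partial_\lambda^k g$, so Privalov applies in norm; also the $(1-\chi)$ piece is harmless since $\Vert g(\lambda)\Vert_{{\cal B}(\sigma,-\sigma)}$ is dominated, via the Hilbert--Schmidt bound, by the integrable function $\int_{\Sigma_\lambda}|\nabla\phi|^{-1}dS$). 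What the paper's method buys is elementarity and reusability: the explicit logarithmic kernel bounds and the local $(s,t)$ machinery are exactly what is recycled in the Puiseux analysis near the critical points, and no trace theory is needed. What your method buys is structure and potential sharpness: the singular-integral step (Privalov) is cleanly separated from the geometric step (traces), and replacing Lipschitz by H\"older continuity in Privalov would lower the $k=0$ requirement toward the Agmon threshold $\sigma>1/2$, which the Hilbert--Schmidt argument cannot reach since it is pinned at $\sigma>3/2$ by the dimension. The price, as you correctly flag, is that the uniform trace estimates on the moving family $\Sigma_\lambda$ must actually be proved; this is standard for a smooth compact family of non-degenerate level sets, but it is the one step of your plan that is not yet an argument.
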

%%%%%%%%%%%%%%%%%%%%%%%%%%%%%%%%%%%%%%%%%%%%%%%%%%%%%%%%%%%%%%%%%%%%%%%%%%%%%%%%%%
\begin{proof} i) Let $k=0$.
  First we prove the convergence (\ref{lim}) for any fixed $z$, follow \cite{E}.
  Let $\chi_j(\theta)$, $j=1,...,l$ are the sufficient small partition
  of unity on the torus $T^3$, which will be specified below. Then
  \be\label{esk1}
     R_0(\omega\pm i\ve,z)=\sum\limits_{j=1}^{l}\frac 1{8\pi^3}\int\limits_{D_j}
     \frac{\chi_j(\theta)e^{-i\theta z}}{\phi(\theta)-\omega\mp i\ve}~d\theta=
     \sum\limits_{j=1}^{l}P_{j}(\omega\pm i\ve,z),
  \ee
  where $D_j$ is the support of the function $\chi_j$.
  If   $\{\phi(\theta)=\omega\}\cap D_j=\emptyset$, then
  the function $P_{j}(\omega\pm i\ve,z)$ is continuous for $\ve\geq 0$ and
  \begin{equation}\label{R0j}
      |P_{j}(\omega\pm i\ve,z)|<C_j<\infty,\quad
      z\in Z^3,\quad \ve\geq 0.
  \end{equation}
  Now let  $S_j=\{\phi(\theta)=\omega\}\cap D_j$.
  Then  any $\theta\in D_j$ can be uniquely represented as
  $\theta=s+tn(s)$
  where $s\in S_j$, and $n(s)$ is the external normal vector to
  $S_j$ at the point $s$ of unit length.
  Let us introduce the new variables $(s,t)$. Then
  \be\label{esk2}
    P_{j}(\omega\pm i\ve,z)=\frac 1{8\pi^3}\int\limits_{S_j}e^{-isz}ds
    \int\limits_{-a(s)}^{b(s)}\frac{\chi_j(s+tn(s))e^{-itn(s)z}J(s,t)}
    {t\psi(s+tn(s))\mp i\ve}~dt.
  \ee
  where $J(s,t)$ is the Jacobian,  $\psi$ is the smooth function, and
  $a(s)$, $b(s)>0$.
  Note that $J(s,t)|_{t=0}=1$, and $\psi(s+tn(s))|_{t=0}=|\nabla\phi(s)|\not=0$,
  since $\om\in\Sigma\setminus\{0,4,8,12\}$ is not a critical value of $\phi(\theta)$.
  We will prove the following lemma:
  %%%%%%%%%%%%%%%%%%%%%%%%%%%%%%%%%%%%%%%%%%%%%%%%%%%%%%%%%%%%%%%%%%%%%%%%%%%%%%%%
  \begin{lemma}\la{SP}
    Let  $\varphi(t,z)$ be the smooth function satisfies
    \be\la{prp}
      |\varphi(t,z)|\le C,\quad |\partial_t\varphi(t,z)|\le C|z|,
      \quad t\in[-\delta,\delta],\;z\in\Z^3,
    \ee
    and let $\psi(t)$ be the smooth function such that $\psi(t)\not =0$
    if $t\in [-\delta,\delta]$.
    Consider
    $$
    F(\pm\ve,z)=\ds\int\limits_{-\delta}^{\delta}
    \frac{\varphi(t,z)}{t\psi(t)\mp i\ve}dt.
    $$
    Then
    $F(\pm\ve,z)\to F(\pm 0,z)$ as $\varepsilon\to 0+$, $\forall z\in\Z^3$,
    and
    $$
       \sup_{\ve\in (0,1]}|F(\pm\ve,z)|\le C\Big(\ln (1+|z|)+1\Big),\quad z\in\Z^3.
    $$
  \end{lemma}
  %%%%%%%%%%%%%%%%%%%%%%%%%%%%%%%%%%%%%%%%%%%%%%%%%%%%%%%%%%%%%%%%%%%%%%%%%%%%%%%%%%%
  \begin{proof}
  Let us rewrite $F(\pm\ve,z)$ as
  \beqn\nonumber
    F(\pm\ve,z)&=&\varphi(0,z)\!\!\int\limits_{-\delta}^{\delta}\!
    \frac{dt}{t\psi(0)\mp i\ve}-\varphi(0,z)\!\!\int\limits_{-\delta}^{\delta}\!
    \frac {(\psi(t)-\psi(0))tdt}{(t\psi(t)\mp i\ve)(t\psi(0)\mp i\ve)}\\
    \nonumber
    &+&\int\limits_{-\delta}^{\delta}\!\frac{\varphi(t,z)-\varphi(0,z)}{t\psi(t)\mp i\ve}dt.
  \eeqn
  Then,
  \be\la{esk3}
\begin{split}
F(\pm\ve,z)\to F(\pm 0,z)&=\pm i\pi\frac{\varphi(0,z)}{\psi(0)}
    -\frac{\varphi(0,z)}{\psi(0)}\!\int\limits_{-\delta}^{\delta}\!
    \frac {\psi(t)\!-\!\psi(0)}{t\psi(t)}dt
\\
&+\int\limits_{-\delta}^{\delta}
    \frac{\varphi(t,z)-\varphi(0,z)}{t\psi(t)}dt
\end{split}
\ee
as $\ve\to 0+$.
  By \eqref{prp} the first and the second summand in  RHS of (\ref{esk3})
  can be estimated by the constant which does not depend on $|z|\in\Z^3$.
  Let us estimate the third summand in  RHS of (\ref{esk3}).
  For $|z|<1/\de$ this summand also can be estimated by the constant.
  For $|z|>1/\de$ we obtain by  \eqref{prp} that
  \begin{align*}
\int\limits_{-\delta}^{\delta}\Big|\frac{\varphi(t,z)-\varphi(0,z)}{t\psi(t)}\Big|dt
&=\int\limits_{|t|<1/|z|}\dots \, +\int\limits_{1/|z|<|t|<\delta}\dots
\\
&\le\frac 1{|z|}C|z|+C\ln|z|\le C\ln|z|.
\end{align*}
Lemma is proved.
  \end{proof}
  %%%%%%%%%%%%%%%%%%%%%%%%%%%%%%%%%%%%%%%%%%%%%%%%%%%%%%%%%%%%%%%%%%%%%%%%%%%%%%%%%%%%
  Lemma \ref{SP} implies that
  $P_{j}(\om\pm i\ve,z)\to P_{j}(\om\pm i0,z)$ as $\varepsilon\to 0+$, $\forall z\in\Z^3$
  and
  $$
   \sup_{\ve\in (0,1]}|P_{j}(\om\pm i\ve,z)|\le C_j\Big(\ln(1+ |z|)+1\Big).
  $$
  Evidently, that the whole resolvent $R_{0}$ satisfies the similar properties.
  Hence by the  Lebesgue dominated convergence theorem
  \begin{equation*}
      \sum\limits_{x,y\in\Z^3}(1+|x|^2)^{-\sigma}
      |R_0(\om\pm i\ve,x-y)-R_0(\om\pm i0,x-y)|^2
      (1+|y|^2)^{-\sigma}\to 0,\;\varepsilon\to 0+
   \end{equation*}
  with  $\sigma>3/2$.
  Then  the Hilbert-Schmidt norm of the difference $R_0(\om\pm i\ve)-R_0(\om\pm i0)$
   converges to zero. Proposition  \ref{LAP} in the case $k=0$ is proved.\\
  ii) In the case  $k\not=0$ we use integration by parts. For instance, let us consider
  $k=1$.
  Since $\nabla\phi(\theta)\not=0$ for $\theta\in D_j$ then there exists $i\in\{1,2,3\}$
  such that $\pa_i\phi(\theta)\not=0$ for $\theta\in D_j$. Hence,
  \begin{align*}
P'_{j}(\omega\pm i\ve,z)&=\frac{1}{8\pi^3}\int\limits_{D_j}
     \frac{\chi_j(\theta)e^{-i\theta z}}{(\phi(\theta)-\omega\mp i\ve)^2}~d\theta
\\
&=-\fr{1}{8\pi^3}\int\limits_{D_j}\pa_i\Big(\frac{1}{\phi(\theta)-\omega\mp i\ve}\Big)
     \fr{\chi_j(\theta)e^{-i\theta z}~d\theta}{\pa_i\phi(\theta)}
\\
&=\fr{1}{8\pi^3}\int\limits_{D_j}\frac{1}{\phi(\theta)-\omega\mp i\ve}
     \pa_i\Big(\fr{\chi_j(\theta)e^{-i\theta z}}{\pa_i\phi(\theta)}\Big)~d\theta.
\end{align*}
The further proof is similar to the case $k=0$. Differentiating the exponent implies
additional factor $z_i$ and then the value of $\si$ increase on one unit.
  \end{proof}
%%%%%%%%%%%%%%%%%%%%%%%%%%%%%%%%%%%%%%%%%%%%%%%%%%%%%%%%%%%%%%%%%%%%%%%%%%%%%%%%%%
%%%%%%%%%%%%%%%%%%%%%%%%%%%%%%%%%%%%%%%%%%%%%%%%%%%%%%%%%%%%%%%%%%%%%%%%%%%%%%%%%%%
\subsection{Asymptotics near critical points}
\label{as}
%%%%%%%%%%%%%%%%%%%%%%%%%%%%%%%%%%%%%%%%%%%%%%%%%%%%%%%%%%%%%%%%%%%%%%%%%%%%%%%%%
%%%%%%%%%%%%%%%%%%%%%%%%%%%%%%%%%%%%%%%%%%%%%%%%%%%%%%%%%%%%%%%%%%%%%%%%%%%%%%%%%%
Further we need the information on behavior of the resolvent $R_0(\om)$
near the critical points $\om_k$. We consider ``elliptic'' points
$\om_1=0$, $\om_4=12$ and ``hyperbolic'' points $\om_2=4$, $\om_3=8$ separately.
%%%%%%%%%%%%%%%%%%%%%%%%%%%%%%%%%%%%%%%%%%%%%%%%%%%%%%%%%%%%%%%%%%%%%%%%%%%%%%%%%%%
%%%%%%%%%%%%%%%%%%%%%%%%%%%%%%%%%%%%%%%%%%%%%%%%%%%%%%%%%%%%%%%%%%%%%%%%%%%%%%%%%%%
\subsubsection{Elliptic points}
\label{as-ell}
%%%%%%%%%%%%%%%%%%%%%%%%%%%%%%%%%%%%%%%%%%%%%%%%%%%%%%%%%%%%%%%%%%%%%%%%%%%%%%%%%%%
%%%%%%%%%%%%%%%%%%%%%%%%%%%%%%%%%%%%%%%%%%%%%%%%%%%%%%%%%%%%%%%%%%%%%%%%%%%%%%%%%%%%%
Here we construct the Puiseux expansion of the free resolvent $R_0(\omega)$
near the point $\om_1=0$
(the expansion near the point $\om_4=12$ can be construct similarly).
%%%%%%%%%%%%%%%%%%%%%%%%%%%%%%%%%%%%%%%%%%%%%%%%%%%%%%%%%%%%%%%%%%%%%%%%%%%%%%%%%%%%
\begin{prop}\label{R0-expell}
  Let $N=0,1,2...$ and  $\sigma>N+3/2$. Then the following expansion holds
  in ${\cal B}(\sigma,-\sigma)$:
  \be\label{expR0}
     R_0(\omega)= \sum\limits_{k=0}^{N}A_{k}\omega^{k/2}
     +{\cal O}(\om^{(N+1)/2}),\;|\om|\to 0,\; \arg\om\in(0,2\pi).
  \end{equation}
Here $A_{k}\in{\cal B}(\sigma,-\sigma)$ with $\sigma>k+1/2$.
\end{prop}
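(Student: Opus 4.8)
The plan is to localize the kernel \eqref{R0} near the single critical point $\theta=0$, flatten $\phi$ to $|\eta|^2$ by a Morse change of variables, and then read off the expansion in powers of $\sqrt\omega$ from one-dimensional radial integrals, in analogy with the continuous three-dimensional resolvent whose kernel is $e^{i\sqrt\omega|z|}/(4\pi|z|)$. First I would fix a smooth cutoff $\chi(\theta)$ equal to $1$ near $\theta=0$ and supported in a small ball, and split \eqref{R0} into the localized piece carrying $\chi$ and the remainder carrying $1-\chi$. On $\supp(1-\chi)$ one has $\phi(\theta)\ge c>0$, so the remainder is analytic in $\omega$ near $0$; by the Paley--Wiener/Schur estimate of Lemma~\ref{PV} its Taylor coefficients are convolutions with exponentially decaying kernels, hence belong to $\mathcal{B}(\sigma,-\sigma)$ for every $\sigma$, and the integer powers $\omega^{m}=\omega^{2m/2}$ they contribute already have the form $\omega^{k/2}$ required in \eqref{expR0}.

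For the localized piece I would use that $\theta=0$ is a nondegenerate minimum of $\phi$ with $\phi(0)=0$ and Hessian $2I$, so the Morse lemma provides a local diffeomorphism $\theta=\theta(\eta)$, $\theta(0)=0$, with $\phi(\theta(\eta))=|\eta|^2$. Writing $z=x-y$ and letting $J$ be the Jacobian, the localized integral becomes $\int a(\eta,z)\,(|\eta|^2-\omega)^{-1}\,d\eta$ with amplitude $a(\eta,z)=\tilde\chi(\eta)J(\eta)e^{-i\theta(\eta)z}$, smooth and compactly supported in $\eta$. Passing to polar coordinates $\eta=r\Omega$ and averaging over $S^2$ reduces this to $\int_0^\infty r^2 b(r,z)(r^2-\omega)^{-1}\,dr$ with $b(r,z)=\int_{S^2}a(r\Omega,z)\,d\Omega$ smooth, compactly supported and even in $r$. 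I would then Taylor-expand $b$ in the even variable $r$ to order $N$ and treat each resulting model integral $\int_0^\infty r^{2m+2}(r^2-\omega)^{-1}\rho(r)\,dr$ separately: the algebraic identity $r^{2m+2}/(r^2-\omega)=\sum_{l=0}^{m}\omega^{l}r^{2(m-l)}+\omega^{m+1}/(r^2-\omega)$ isolates integer powers of $\omega$ from the remainder $\omega^{m+1}\int_0^\infty\rho(r)(r^2-\omega)^{-1}\,dr$, whose leading singular behavior $\tfrac{1}{2\sqrt\omega}\ln\frac{r-\sqrt\omega}{r+\sqrt\omega}\big|_{r=0}$ produces the half-integer power $\omega^{m+1/2}$, the branch of $\sqrt\omega$ being fixed by $\arg\omega\in(0,2\pi)$. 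Collecting all contributions reconstitutes \eqref{expR0}, the odd half-integer powers arising solely from these remainders.

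The role of the weights, and the main difficulty, is the control of the $z$-dependence. Each differentiation of the factor $e^{-i\theta(\eta)z}$ brings down a factor of $z$, so the coefficient $A_k$ has a kernel growing like $|z|^{k-1}=|x-y|^{k-1}$, matching the Taylor coefficients of $e^{i\sqrt\omega|z|}/(4\pi|z|)$. I would bound the operator norm by the Hilbert--Schmidt/Schur estimate: for $k\ge1$ the inequality $|x-y|^{k-1}\le(1+|x|)^{k-1}(1+|y|)^{k-1}$ shows that $\|A_k\|_{\mathcal{B}(\sigma,-\sigma)}$ is finite as soon as $\sum_{x}(1+|x|)^{2(k-1-\sigma)}<\infty$, i.e. $\sigma>k+1/2$ (the case $k=0$, whose kernel decays like $|x-y|^{-1}$, is the standard weighted Green-function estimate requiring $\sigma>1/2$). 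The hard part is to make the remainder estimate uniform in $z$: one must show that the order-$N$ Taylor remainder of $b(r,z)$ carries at most $|z|^{N}$ growth while the residual integral stays $\mathcal{O}(\omega^{(N+1)/2})$, which forces $\sigma>N+3/2$ in the error term of \eqref{expR0}. Establishing these $z$-weighted bounds uniformly for all $\arg\omega\in(0,2\pi)$, rather than merely for each fixed $z$ as in Proposition~\ref{LAP}, is the technical heart of the argument.
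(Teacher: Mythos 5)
Your overall architecture agrees with the paper's up to a point: the localization near $\theta=0$ with an analytic outer piece, the passage to a radial integral against $(r^2-\omega)^{-1}$, and the identification of $\int_0^\delta\frac{dr}{r^2-\omega}=\mp\frac{i\pi}{2\sqrt\omega}+(\text{analytic})$ as the source of the half-integer powers are all correct and match \S\ref{as-ell}. The gap is in the step you yourself call the technical heart, and it is not merely unproved --- the mechanism you propose cannot give the stated weights. After averaging over the \emph{full} sphere, your amplitude is $b(r,z)=\int_{S^2}e^{-i\theta(r\Omega)\cdot z}\,\tilde\chi J\,d\Omega$; in the flat model this is $4\pi\sin(r|z|)/(r|z|)$, whose coefficient of $r^{2m}$ is exactly $(-1)^m 4\pi|z|^{2m}/(2m+1)!$. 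Thus Taylor expansion in $r$ followed by term-by-term estimation discards the oscillatory gain $|b(r,z)|\lesssim\min\bigl(1,(r|z|)^{-1}\bigr)$, and this loses precisely one power of $|z|$ everywhere: (i) already for $k=0$ your decomposition yields at best $|A_0|=\mathcal{O}(1)$ (term by term, even $\mathcal{O}(|z|^{2M})$), not the Green-function decay $|z|^{-1}$ you assert, so you only get $\sigma>3/2$ rather than $\sigma>1/2$; (ii) for even $k$ the integer-power coefficients are sums of terms $b_m(z)\int r^{2(m-k/2)}\rho\,dr$ with $m\ge k/2$, each of size $|z|^{2m}$, so even with the ``coefficients do not depend on $N$'' uniqueness trick you get $A_k=\mathcal{O}(|z|^{k})$, not $\mathcal{O}(|z|^{k-1})$; (iii) for odd $N$ the Taylor remainder needed to reach the order $\mathcal{O}(\omega^{(N+1)/2})$ carries $|z|^{2M+2}=|z|^{N+1}$, forcing $\sigma>N+5/2$ instead of $\sigma>N+3/2$ in \eqref{expR0}. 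Your justification ``each differentiation brings down a factor of $z$, so $A_k$ grows like $|z|^{k-1}$'' is exactly the step that fails: in your scheme the differentiations produce $|z|^{2m}$ on the coefficient multiplying \emph{all} powers $\omega^l$, $l\le m$, and the cancellations among these terms are invisible to absolute-value estimates.

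The paper avoids this by never Taylor-expanding the amplitude in $r$. It chooses spherical coordinates with the third axis along $z$, performs the angular integration \emph{exactly} (keeping $e^{-ir|z|n_3}$ intact), extends the radial integral to $(-\delta,\delta)$ by evenness, and deforms the contour to the lower semicircle $\Gamma_\delta=\{|r|=\delta,\ \Im r<0\}$ by the Cauchy residue theorem, see \eqref{R3}. The entire Puiseux series then comes from the single residue term $-\frac{i\sqrt\omega}{32\pi^4}\int_{S_+}e^{i\sqrt\omega|z|n_3}dS(n)$, whose expansion in $\sqrt\omega$ is sharply calibrated: the coefficient of $\omega^{k/2}$ is $\mathcal{O}(|z|^{k-1})$, and the truncation error is $\mathcal{O}\bigl(|\omega|^{(N+1)/2}|z|^{N}\bigr)$ because $\bigl|e^{iw}-\sum_{j<N}(iw)^j/j!\bigr|\le|w|^N/N!$ for $\Im w\ge0$; the leftover integral over $\Gamma_\delta$ equals $\frac{i}{4\pi^2|z|}\int_{\Gamma_\delta}\frac{r(1-e^{-ir|z|})}{r^2-\omega}\,dr$ and is analytic in $|\omega|<\delta^2/2$ with all coefficients $\mathcal{O}(|z|^{-1})$. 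To salvage your scheme you would need either this device or a $|z|$-dependent two-scale splitting of the radial integral (of the kind the paper uses at the hyperbolic points, Appendix B), neither of which appears in your proposal. Your one genuine improvement --- invoking the Morse lemma instead of the paper's ``for simplicity $\phi(\theta)=|\theta|^2$'' --- is welcome, but note that after that nonlinear change of variables the phase $\theta(\eta)\cdot z$ is no longer linear in $\eta$, so the exact angular integration does not apply verbatim and the cancellation still has to be quantified by hand.
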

%%%%%%%%%%%%%%%%%%%%%%%%%%%%%%%%%%%%%%%%%%%%%%%%%%%%%%%%%%%%%%%%%%%%%%%%%%%%%%%%%%%
\begin{proof}
  The resolvent $R_0(\omega\pm i0)$ is represented by the integral (\ref{R0}).
  Fix  $0<\delta<1$ and consider $0<|\omega|<\delta^2/2$.
  We identify $T^3$ with the cube $[-\pi,\pi]^3$ and represent
  $R_0(\omega,z)$, $z=x-y$, as the sum
 $$
    R_0(\omega,z)=\frac 1{8\pi^3}\int\limits_{B_{\delta}}
    \frac{e^{-i\theta z}}{\phi(\theta)-\om}~d\theta+\frac 1{8\pi^3}\int\limits_{T^3\setminus B_{\delta}}
    \frac{e^{-i\theta z}}{\phi(\theta)-\om}~d\theta=R_{01}(\om, z)+R_{02}(\om, z),
$$
  where $B_{\delta}$ is the ball of radius $\delta$.
Since  $\phi(\theta)=|\theta|^2+{\cal O}(|\theta|^4)$, then
 $R_{02}(\omega,z)$ is analytic function of $\omega$ in $|\omega|\le\delta^2/2$, and
 $$
    |\partial_\om ^j R_{02}(\om,z)|\leq\frac{C_{j,N}}{(|z|+1)},
    \quad |\om|\le\delta^2/2,\quad z\in   \Z^3.
 $$
 Hence it suffices to prove the asymptotics of type (\re{expR0}) for $R_{01}$.
 For simplicity we suppose that $\phi(\theta)=|\theta|^2$ (In the case
 $\phi(\theta)=|\theta|^2+{\cal O}(|\theta|^4)$ the scheme of proving is similar and
 differ only the technical details).
 Let us choose the system of coordinate in which the direction of the axe
 $\theta_3$ coincides with the direction of vector  $z$ and rewrite $R_{01}(\om,z)$ as
 $$
    R_{01}(\om,z)= \frac 1{8\pi^3}\int\limits_{|n|=1}\Big(\int\limits_0^{\de}
    \frac{e^{-ir|z|n_3}r^2dr}{r^2-\om}\Big)dS(n).
$$
 Here $r=|\theta|$, $\theta=rn$.  Then
 \be\la{R2}
\begin{split}
R_{01}(\om,z)&=\frac 1{8\pi^3}\!\int\limits_{S_{+}}\bigg(\int\limits_0^{\de}
    \frac{ e^{-ir|z|n_3}r^2dr}{r^2-\om}
    +\int\limits_0^{\de}
    \frac{e^{ir|z|n_3}r^2dr}{r^2-\om}\bigg)dS(n)
\\
&=\frac 1{8\pi^3}\int\limits_{S_{+}}\bigg(\!\int\limits_{-\de}^{\de}
    \frac{e^{-ir|z|n_3}r^2dr}{r^2 -\om}\bigg)dS(n),
\end{split}
\ee
where $S_{+}=\{|n|=1,\;n_3>0\}$.
The integrand in the RHS of (\re{R2}) has one simple pole
  at the lower half-plane. Let us apply the Cauchy residue theorem:
  \be\la{R3}
\begin{split}
R_{01}(\om,z)&=-\frac {i\sqrt\om}{32\pi^4}\int\limits_{S_1^{+}}
    e^{i\sqrt{\om}|z|n_3}dS(n)
    +\frac 1{8\pi^3}\int\limits_{S_{+}}\bigg(\int\limits_{\Gamma_{\de}}
    \frac{e^{-ir|z|n_3}r^2~dr}{r^2-\om}\bigg)dS(n)
\\
&=R_{01}^1(\om,z)+R_{01}^2(\om,z).
\end{split}
\ee
Here $\Gamma_{\de}=\{|r|=\de,\Im r<0\}$. For the first summand the
  asymptotics of type (\re{expR0}) are evident.
  Let us consider the second summand in the RHS of (\re{R3}):
  \beqn\nonumber
    R_{01}^{2}(\om,z)&=&\frac 1{8\pi^3}\int\limits_{0}^{2\pi}d\beta
    \int\limits_{0}^{\pi/2}d\al
    \int\limits_{\Gamma_{\de}}
    \frac{e^{-ir|z|\cos\al}r^2\sin\al~ dr}{r^2-\om}\\
    \nonumber
    &=&\frac i{4\pi^2|z|}\int\limits_{\Gamma_{\de}}\frac{rdr}{r^2-\om}
    \int\limits_{0}^{\pi/2}~de^{-ir|z|\cos\al}
    =\frac i{4\pi^2|z|}\int\limits_{\Gamma_{\de}}\fr{r(1-e^{-ir|z|})dr}{r^2-\om},
    \quad |z|\not=0,
  \eeqn
  where $\alpha$ is the angle between $z$ and $\theta$.
  Since $r^2=\delta^2$ on $\Gamma_{\de}$, then the operator value function
  $R_{01}^{2}(\om)$ is analytic in $|\om|<\de^2/2$. Moreover, the function $R_{01}^{2}(\om)$
  and all its derivatives in respect to
  $\om$ are bounded in ${\cal B}(\sigma,-\sigma)$ with  $\sigma>1/2$.
  Hence $R_{01}^{2}$ admits an expansion of  type (\re{expR0}).
\end{proof}
%%%%%%%%%%%%%%%%%%%%%%%%%%%%%%%%%%%%%%%%%%%%%%%%%%%%%%%%%%%%%%%%%%%%%%%%%%%%%%
\begin{remark}\label{dif1}
  The expansion  \eqref{expR0} can been differentiated $N+1$ times in
  ${\cal B}(\sigma,-\sigma)$ with $\sigma>N+3/2$:
  $$
    \partial_\om^r R_0(\om)=\partial_\om^r\Big(\sum\limits_{k=0}^{N}A_k\om^{k/2}
    \Big)+{\cal O}(\om^{(N+1)/2-r}),\quad 1\le r\le N+1.
  $$
\end{remark}
\begin{proof}
For the proof let us note, that each differentiation of the resolvent in respect to $\om$
increase the power of pole of the integrand in the RHS of (\re{R2}) on one unit. Therefore,
the calculation of the corresponding residue leads to differentiation of the exponent and
then to appear extra factors $|z|$. Hence the value of $\si$ increase  on one unit.
\end{proof}
%%%%%%%%%%%%%%%%%%%%%%%%%%%%%%%%%%%%%%%%%%%%%%%%%%%%%%%%%%%%%%%%%%%%%%%%%%%%%%%%%%%
%%%%%%%%%%%%%%%%%%%%%%%%%%%%%%%%%%%%%%%%%%%%%%%%%%%%%%%%%%%%%%%%%%%%%%%%%%%%%%%%%%%
\subsubsection{Hyperbolic points}
\label{as-hyp}
%%%%%%%%%%%%%%%%%%%%%%%%%%%%%%%%%%%%%%%%%%%%%%%%%%%%%%%%%%%%%%%%%%%%%%%%%%%%%%%%%%%
%%%%%%%%%%%%%%%%%%%%%%%%%%%%%%%%%%%%%%%%%%%%%%%%%%%%%%%%%%%%%%%%%%%%%%%%%%%%%%%%%%%
Here we construct the Puiseux expansion of the free resolvent $R_0(\omega)$ near the
``hyperbolic" point $\om_2=4$ (the expansion near the  point $\om_3=8$ can be construct similarly).
The main contribution into (\ref{R0}) is given by the corresponding critical points
$(0,0,\pi)$, $(0,\pi,0)$ and $(\pi,0,0)$ of hyperbolic type.
%%%%%%%%%%%%%%%%%%%%%%%%%%%%%%%%%%%%%%%%%%%%%%%%%%%%%%%%%%%%%%%%%%%%%%%%%%%%%%%%%%%
\begin{prop}\label{R0-exphyp}
   Let $N=-1,0,1,...$ and $\sigma>2N+7/2$. Then in  ${\cal B}(\sigma,-\sigma)$
    the expansion holds:
   \begin{equation}\label{expR0h}
     R_0(4+\om)=
     \!\!\sum\limits_{k=0}^{N}E_{k}\om^k
     +\sqrt\om\sum\limits_{k=0}^{N}B_{k}\om^k
     +{\cal O}(\om^{N+1}),\;|\om|\to 0,\;\Im\om>0.
  \end{equation}
  Here the operators $E_{k}, B_{k}\in{\cal B}(\sigma,-\sigma)$ with $\sigma>2k+3/2$.
  In the case $\Im\om<0$ the similar expansion  holds.
 \end{prop}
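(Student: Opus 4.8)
The plan is to follow the scheme of Proposition~\ref{R0-expell}, localizing the integral \eqref{R0} near the three hyperbolic critical points by a partition of unity and reducing each localized piece, via the Morse lemma, to a model integral governed by the indefinite quadratic form $Q(\xi)=\xi_1^2+\xi_2^2-\xi_3^2$, which is the quadratic part of $\phi(\theta)-4$ at $(0,0,\pi)$. Writing $R_0(4+\omega,z)=\sum_j P_j(\omega,z)+R_{\rm reg}(\omega,z)$, where $R_{\rm reg}$ collects the contributions supported away from all critical points, one disposes of $R_{\rm reg}$ as follows: on the part of its support where $\phi\ne 4$ the denominator is bounded away from zero and the term is analytic in $\omega$, while on the part meeting the regular level set $\{\phi=4\}$ (where $\nabla\phi\ne0$) the Lemma~\ref{SP} argument of Proposition~\ref{LAP} shows it depends smoothly on $\omega$ near $0$ and hence contributes only integer powers, i.e. to the $E_k$. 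All half-integer powers must therefore come from the critical pieces $P_j$.

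For the critical piece near $(0,0,\pi)$ I would first apply the Morse lemma to change variables so that $\phi-4=Q(\xi)$ with a smooth Jacobian (as in the elliptic case one may pretend $\phi-4=Q$ exactly, the general case differing only in technical details), arriving at $P(\omega,w)=\frac1{8\pi^3}\int_{B_\delta}\frac{a(\xi)\,e^{-i\xi\cdot w}}{Q(\xi)-\omega}\,d\xi$, where $w$ is the image of $z$, $|w|\sim|z|$, and $a$ is a smooth cutoff times the Jacobian. Splitting $w=(w',w_3)$ with $w'=(w_1,w_2)$ and using polar coordinates $\xi_1^2+\xi_2^2=\rho^2$ in the two elliptic directions (aligning $w'$ with the $\xi_1$-axis), the angular integration produces a Bessel factor $J_0(\rho|w'|)$, so that the two elliptic directions can be integrated by a residue at $\rho=\sqrt{\xi_3^2+\omega}$ exactly as in \eqref{R3}. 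This leaves a one-dimensional integral over the hyperbolic variable $\xi_3$ whose integrand carries the factor $e^{\,i\sqrt{\xi_3^2+\omega}\,|w'|}$ and therefore has branch points at $\xi_3=\pm i\sqrt\omega$; the sign of $\Im\omega$ fixes which branch is selected, giving the two cases $\Im\omega\gtrless0$.

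The crux is the analysis of this remaining one-dimensional integral. As $\omega\to0$ the two branch points $\pm i\sqrt\omega$ pinch the contour at $\xi_3=0$, and it is precisely this coalescence that generates the non-analytic behavior. I would extract the expansion either by rescaling $\xi_3=\sqrt\omega\,u$ near the pinch and expanding the smooth parts in Taylor series, or by deforming onto the branch cut and separating the analytic part (contributing the $E_k\omega^k$) from the branch-cut contribution (contributing $\sqrt\omega\sum B_k\omega^k$), the remainder after $N$ terms being $\mathcal{O}(\omega^{N+1})$. Each order is obtained by differentiating the exponentials $e^{i\sqrt\omega\,|w'|}$ and $e^{-i\xi_3 w_3}$, which brings down powers of $|w|\sim|z|$; because two such mechanisms act here (the residue factor from the elliptic directions and the pinched integral in the hyperbolic direction), each step costs \emph{two} units of weight rather than one, which is the source of the condition $\sigma>2k+3/2$ for $E_k,B_k$ and $\sigma>2N+7/2$ for the $N$-th remainder. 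The uniformity in $z$ needed to conclude that the coefficients are genuine operators in $\mathcal{B}(\sigma,-\sigma)$, and not merely finite matrix elements, is guaranteed by estimates of the Lemma~\ref{SP} type applied to the one-dimensional integral.

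The main obstacle is exactly this last step: controlling the pinching branch points in the hyperbolic direction uniformly in $z$ and reading off a clean Puiseux series with the correct half-integer structure. The indefinite signature means the level set $\{Q=\omega\}$ is a truncated hyperboloid rather than a sphere, so there is no single compact residue as in the elliptic case; the non-analyticity is spread along the cone and must be harvested from the coalescing singularities. Finally I would sum the three critical contributions from $(0,0,\pi)$, $(0,\pi,0)$, $(\pi,0,0)$, all of the same model type, together with $R_{\rm reg}$ to obtain \eqref{expR0h}, and treat $\Im\omega<0$ by conjugating the branch choice.
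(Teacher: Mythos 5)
Your opening moves coincide with the paper's (localization at the three hyperbolic critical points, with the non-critical part of the level set $\{\phi=4\}$ handled by the methods of Proposition \ref{LAP} and Lemma \ref{SP}), and your weight bookkeeping --- two units of $\sigma$ per order, giving $\sigma>2k+3/2$ for the coefficients and $\sigma>2N+7/2$ for the remainder --- is the correct conclusion. The gap is in the core step. Everything you do after the Morse lemma (the Bessel factor $J_0(\rho|w'|)$, the residue at $\rho=\sqrt{\xi_3^2+\omega}$, branch points at exactly $\pm i\sqrt\omega$) presupposes that the phase is still a plane wave $e^{-i\xi\cdot w}$ in the new coordinates. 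It is not: a nonlinear Morse change of variables $\xi=\Phi(\theta)$ turns $e^{-i\theta\cdot z}$ into $e^{-iz\cdot\Phi^{-1}(\xi)}$, and there is no vector $w$ with $|w|\sim|z|$ doing what you claim. If instead one uses the exact substitution $s_i=2\sin(\theta_i/2)$, as the paper does, the $z$-dependence sits in the amplitude $\prod_i\cos\bigl(2z_i\arcsin(s_i/2)\bigr)$, which does continue analytically but grows like $e^{c|z|\,|\Im s|}$ off the real domain; hence the contour deformations your residue step requires (already delicate, since $J_0$ grows in both complex half-planes and its Hankel components $H_0^{(1,2)}$ carry a logarithmic branch point at the origin) cost factors exponential in $|z|$ and destroy exactly the uniform ${\cal B}(\sigma,-\sigma)$ control that the proposition asserts.

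The second gap is the one you flag yourself: the pinch analysis is ``the main obstacle'' and you leave it as a plan, yet it is the entire content of the proposition. Note moreover that near the pinch the Hankel factor is evaluated at small argument, where it is logarithmic, so your scheme generates $\log$-terms tied to the competing scales $|\omega|$ and $1/|z|$ which must be shown to reassemble into a pure Puiseux series with a remainder admissible in the weighted norms --- this interplay is precisely where the work lies. The paper's proof avoids complex analysis in $z$ altogether: after the cylindrical coordinates \eqref{Q1} it applies the squaring map $\rho_1=\tau^2-s_3^2$, $\rho_2=2\tau s_3$, which makes the denominator \emph{linear} ($\rho_1-\omega$, see \eqref{h-int}); the Taylor expansion \eqref{hk} of the amplitude, with the derivative bounds \eqref{HN-est}, reduces the coefficients to the explicit $z$-independent integrals $I_l$ of Lemma \ref{IK} (computed by induction in Appendix A, yielding the structure $s_l(\omega)+C_l\omega^l\sqrt\omega$, i.e.\ the operators $E_k$ and $B_k$), while the remainder is controlled by the region-splitting estimate of Lemma \ref{rem-est1} in Appendix B, which produces the $|z|^{2N}\ln^2|z|$ bound behind the condition $\sigma>2N+7/2$. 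Without an analogue of Lemma \ref{IK} and of that remainder estimate, your proposal establishes neither the half-integer structure of \eqref{expR0h} nor the stated operator norms.
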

%%%%%%%%%%%%%%%%%%%%%%%%%%%%%%%%%%%%%%%%%%%%%%%%%%%%%%%%%%%%%%%%%%%%%%%%%%%%%%%%%%%%%%%%
\begin{proof}
  For $\om=\om_2=4$ the denominator of the integral (\ref{R0}) vanishes along the curve
  $\phi(\theta)=4$.  We will study main contribution of points
  $(0,0,\pi)$, $(0,\pi,0)$ and $(\pi,0,0)$ of the curve which are critical points of
  $\phi(\theta)$. The contribution of other points of the curve  can be proved by
  methods of Section \ref{lim-ab-pr}. For concreteness, let us consider the integral
  over a neighborhood of the point $(\pi,0,0)$.
  Let $\zeta(\theta)$ be a smooth cutoff function, equal 1 in a neighborhood of the point
  $(\pi,0,0)$ (the other properties of $\zeta(\theta)$ we specified below). For
  $\Im\om> 0$ denote
  \begin{align*}
    Q(\om,z)&=\frac 1{8\pi^3}\int\limits\frac{e^{-iz\theta}~\zeta(\theta)
    ~d\theta}{\phi(\theta)-4-\om}
\\
&=\frac{e^{-iz_3\pi}}{8\pi^3}\int
    \frac{e^{-iz\theta'}\zeta_1(\theta')~d\theta'}
    {4\sin^2\frac{\theta_1}2+4\sin^2\frac{\theta_2}2-4\sin^2\frac{\theta_3'}2-\om},
\end{align*}
where $\theta'_3=\theta_3-\pi$, $\theta'=(\theta_1,\theta_2,\theta_3')$, è
  $\zeta_1(\theta')=\zeta(\theta)$.
  We suppose that $\zeta_1(\theta')$ is symmetric in $\theta_1$,
  $\theta_2$ and $\theta'_3$.
  Then  the exponent in the numerator
  can be  substituted by its even part, so we have
  \begin{align*}
    Q(\om,z)&=\frac{e^{-iz_3\pi}}{\pi^3}\int\limits_{0}^{\infty}
    \int\limits_{0}^{\infty}\int\limits_{0}^{\infty}
    \frac{\cos(z_{1}\theta_{1})~\cos(z_{2}\theta_{2})~\cos(z_{3}\theta_{3})
    ~\zeta_1(\theta)d\theta}
    {4\sin^2\frac{\theta_1}2+4\sin^2\frac{\theta_2}2-4\sin^2\frac{\theta_3}2-\om}
\\
&=\frac{e^{-iz_3\pi}}{\pi^3}Q_1(\om,z).
\end{align*}
Let us obtain the expansion of type \eqref{expR0h} for $Q_1$.
%%%%%%%%%%%%%%%%%%%%%%%%%%%%%%%%%%%%%%%%%%%%%%%%%%%%%%%%%%%%%%%%%%%%%%%%%%%%%%%%%%%%%%%%%%%%
  We change  the variables: $s_i=2\sin \fr{\theta_i}2$, and choose the cutoff function
  $\zeta$ such that $\zeta_1(\theta)=\zeta_2(|s|^2)$,
  with smooth function $\zeta_2$. Then
  \begin{align*}
Q_1(\om,z)&=\int\limits_{0}^{\infty}\int\limits_{0}^{\infty}\int\limits_{0}^{\infty}
    \frac {F(z,s_1^2,s_2^2,s_3^2)\zeta_2(|s|^2)~ds}
    {s_1^2+s_2^2-s_3^2-\om},
\\
F(z,s_1^2,s_2^2,s_3^2)&=\prod\limits_i \fr{2\cos(2z_i\arcsin
s_i/2)}{\sqrt{4-s_i^2}}.
\end{align*}
Now we use  the cylindrical variables:
  $s_1=\tau\cos\varphi, s_2=\tau\sin\varphi, s_3=s_3$. Then
\be\la{Q1}
\begin{split}
Q_1(\om,z)&=\int\limits_{0}^{\infty}\!\int\limits_{0}^{\infty}\!\frac
    {F_1(z,\tau^2,s_3^2)\zeta_2(\tau^2\!+\!s_3^2)\tau d\tau ds_3}
    {\tau^2-s_3^2-\om},
\\
F_1(z,\tau^2,s_3^2)&=\int\limits_0^{\pi/2}
     F(z,\tau^2\cos^2\!\varphi,\tau^2\sin^2\varphi,s_3^2)d\varphi.
\end{split}
\ee
We change  the variables once more:
  $$
    \rho_1=\tau^2-s_3^2=R^2\cos 2\psi,\quad\rho_2=2\tau s_3=R^2\sin 2\psi,
  $$
  where $R$, $\psi$ are the polar coordinates on the plane $(\tau,s_3)$.
  Then $|\rho|^2=\rho_1^2+\rho_2^2=R^4$, hence,
  $|\rho|=R^2,\;~\tau^2=(|\rho|+\rho_1)/2,\;~ s_3^2=(|\rho|-\rho_1)/2,\;~
  d\rho_1 d\rho_2=4|\rho|d\tau ds_3$ and
  \be\la{h-int}
    Q_1(\om,z)=\int\limits_{0}^{\infty}\Big(\int\limits_{\R}
    \frac{h(|\rho|,\rho_1,z)}{(\rho_1-\om)|\rho|}\sqrt{|\rho|+\rho_1}
    d\rho_1\Big)d\rho_2,
  \ee
  where
  $h(|\rho|,\rho_1,z)=F_1(z,\fr{|\rho|+\rho_1}2,
  \fr{|\rho|-\rho_1}2)\zeta_2(|\rho|)/4\sqrt 2$.
  Now we can specify all needed properties of cutoff function:
  $$
    \supp\zeta_2(|\rho|)\cap\{\rho\in\R^2: \rho_2\ge 0\}
    \subset\Pi=\{(\rho_1,\rho_2): -\delta\le\rho_1\le\delta,\,0\le\rho_2\le\delta\}
  $$
  with some $0<\delta<1$. We consider $0<|\om|\le\de/2$, $\Im\om>0$.
%%%%%%%%%%%%%%%%%%%%%%%%%%%%%%%%%%%%%%%%%%%%%%%%%%%%%%%%%%%%%%%%%%%%%%%%%%%%%
  Denote $r=|\rho|$. The function $h(r,\rho_1,z)$
  can be expanded into the following finite Taylor series with respect to $\rho_1$:
  \be\la{hk}\
    h(r,\rho_1,z)=h_0(r,z)+h_1(r,z)\rho_1+...
    +h_N(r,z)\rho_1^N+H_{N}(r,\rho_1,z)\rho_1^{N},
  \ee
  where $h_k(r,z)$ are polynomial in $z$ of order $2k$, and
  \be\la{HN-est}
\begin{split}
|H_{N}(r,\rho_1,z)|&\le C|z|^{2N},
\\
|\pa_{\rho_1} H_{N}(r,\rho_1,z)|&\le C|z|^{2N+2}, \quad
(\rho_1,\rho_2)\in [-\de,\de]\times [0,\de].
\end{split}
\ee
%%%%%%%%%%%%%%%%%%%%%%%%%%%%%%%%%%%%%%%%%%%%%%%%%%%%%%%%%%%%%%%%%%%%%%%%%%
 Let us substitute  (\ref{hk}) into (\ref{h-int}). Then
 \be\la{JJ}
  Q_1(\om,z)=\sum\limits_{k=0}^N J_k(\om,z)+{\tilde J_N}(\om,z),
\ee
where
\begin{align*}
  J_k(\om,z)&=\int\limits_{\Pi}
    \frac{h_k(r,z)\rho_1^k\sqrt{r+\rho_1}}{(\rho_1-\om)r}d\rho_1 d\rho_2,
\\
{\tilde J_N}(\om,z)&=\int\limits_{\Pi}\!
    \frac{H_{N}(r,\rho_1,z)\rho_1^{N}d\rho_1 d\rho_2}
    {(\rho_1-\om)r}\sqrt{r+\rho_1}.
\end{align*}
{\bf Step i)}.
  First we consider the summands  $J_k(\om,z)$, $k=0,1,...,N$:
  \beqn\nonumber
    J_k(\om,z)
    \!\!\!&=&\!\!\!\int\limits_{\Pi}\!\!
    \frac{h_k(r,z)\sqrt{r+\rho_1}}{r}
    \Big(\rho_1^{k-1}\!+\!\om \rho_1^{k-2}+\!...
    +\!\om^{k-1}\!\!+\!\frac{\om^k}{\rho_1-\om}\Big) d\rho_1 d\rho_2\\
    \la{hk-int}
    \!\!\!&=&\!\!\!\sum\limits_{k=0}^{k-1} a_{k,j}(z)\om^j
    +\om^k\int\limits_{\Pi}\frac{h_k(r,z)\sqrt{r+\rho_1}}
    {(\rho_1-\om)r}d\rho_1 d\rho_2\\
    \nonumber
    \!\!\!&=&\!\!\!\sum\limits_{j=0}^{k-1} a_{k,j}(z)\om^j+
    \om^k\int\limits_0^\de\frac{ h_k(r,z)}{\sqrt{r}}dr\int\limits_{0}^{\pi}
    \frac{\sqrt{1+\cos\psi}~d\psi}{\cos\psi-\om/r},
  \eeqn
where $a_{k,j}(z)$ are polynomial  of order $2k$.
Let us calculate the integral
\be\la{tan}
\begin{split}
\int\limits_{0}^{\pi}\frac{\sqrt{1+\cos\psi}~d\psi}{\cos\psi-\om/r}
    &=\int\limits_0^{\pi}\!\frac{2\sqrt 2d(\sin\frac{\psi}2)}
    {1-2\sin^2\frac{\psi}2-\frac{\om}{r}}
\\
&=\int\limits_0^1\!\frac{-\sqrt 2dt}{t^2-\frac{r-\om}{2r}}
    =\frac{-\sqrt{r}}{\sqrt{r-\om}}
    \log\frac{1\!-\!\sqrt{\frac{r-\om}{2r}}}{1\!+\!\sqrt{\frac{r-\om}{2r}}}
    +\frac{\pi i\sqrt{r}}{\sqrt{r-\om}}.
\end{split}
\ee
Here $\sqrt r\ge0$, function  $z=\sqrt{r-\om}$ is  analytic in
  $\Im\om>0$ with the values in $\Im z<0,\;\Re z>0$, and function
  $\zeta=\log w$ is analytic in $|w|<1,\;\Im w>0$, where
  $\log(-1)=\pi i$.

  Substitute (\ref{tan}) into (\ref{hk-int}), we get
  \be\la{hk-int1}
    J_k(\om,z)
    =\sum\limits_{j=0}^{k-1} a_{k,j}(z)\om^j+\om^k\!\int\limits_0^{\de}
    \left(\pi i-\log\frac{1-\sqrt{\frac{r-\om}{2r}}}
    {1+\sqrt{\frac{r-\om}{2r}}}\right)\frac{h_k(r,z)dr}{\sqrt{r-\om}}
  \ee
Let us expand $h_k(r,z)$ into the following finite Taylor series with respect to  $r$:
  \be\la{hkj}
    h_k(r,z)=h_{k,0}(z)+h_{k,1}(z)r+...+h_{k,N-k}(z)r^{N-k}+
    H_{k,N-k}(r,z)r^{N-k},
  \ee
  where $h_{k,j}(z)$ are polynomial of order $2(k+j)$,
  and $|H_{k,N-k}(r,z)|\le C|z|^{2N}$, $0\le r\le\delta$.
  The following lemma is true
  %%%%%%%%%%%%%%%%%%%%%%%%%%%%%%%%%%%%%%%%%%%%%%%%%%%%%%%%%%%%%%%%%%%%%%%%%%%%%%%%%%
  \begin{lemma}\la{IK}
    Let $0<|\om|<\de/2$, $\Im\om>0$. Then
    \be\la{Ik}
      I_l=\int_0^{\de}\left(\pi i-\log\frac{1-\sqrt{\frac{r-\om}{2r}}}
      {1+\sqrt{\frac{r-\om}{2r}}}\right)
      \frac{r^ldr}{\sqrt{r-\om}}=s_{l}(\om)+C_l\om^l\sqrt{\om},
    \ee
    where $s_{l}$  are analytic in $0<|\om|<\de/2$,
    $\Im\om>0$, $C_l\in\R$.
  \end{lemma}
  %%%%%%%%%%%%%%%%%%%%%%%%%%%%%%%%%%%%%%%%%%%%%%%%%%%%%%%%%%%%%%%%%%%%%%%%%%%%%%%%%%%%
  We shall prove this lemma in Appendix A.
  Now \eqref{hk-int1}-\eqref{Ik} imply that for
  $0<|\om|<\de/2$, $\Im\om>0$
  \beqn\nonumber
  J_k(\om,z)&=&\sum\limits_{j=0}^{N} b_{k,j}(z)\om^j
    +\om^k\sqrt\om\sum\limits_{j=0}^{N-k}c_{k,j}(z)\om^j
    +\tilde a_{N,k}(\om,z)\om^{N+1}\\
    \la{hk-int2}
    &+&\om^k\int\limits_0^{\de}
    \left(\pi i\!-\!\log\frac{1-\sqrt{\frac{r-\om}{2r}}}
    {1+\sqrt{\frac{r-\om}{2r}}}\right)
    \frac{H_{k,N-k}(r,z)r^{N-k}dr}{\sqrt{r-\om}},
  \eeqn
  where $|b_{k,j}(z)|\le C|z|^{2N}$,  $|c_{k,j}(z)|\le C|z|^{2(k+j)}$,
  and $|\tilde a_{N,k}(\om,z)|\le C|z|^{2N}$.
%%%%%%%%%%%%%%%%%%%%%%%%%%%%%%%%%%%%%%%%%%%%%%%%%%%%%%%%%%%%%%%%%%%%%%%%%%%%%%
Further,
  \be\la{HNk}
    \int\limits_0^{\de}
    \left(\pi i-\log\frac{1-\sqrt{\frac{r-\om}{2r}}}
    {1+\sqrt{\frac{r-\om}{2r}}}\right)
    \frac{H_{k,N-k}(r,z)r^{N-k}dr}{\sqrt{r-\om}}
    =\int\limits_0^{2|\om|}+\int\limits_{2|\om|}^{\de}={\cal I}_1+{\cal I}_2.
  \ee
  In ${\cal I}_1$ we change the variable: $r=|\om|\tau$. Then
  \be\la{I1}
\begin{split}
|{\cal I}_1|&=|\om|^{N-k}\sqrt{|\om|}~\left|\int\limits_0^2
    \left(\!\pi i-\log\frac{1-\sqrt{\frac{\tau-\om/|\om|}{2\tau}}}
    {1+\sqrt{\frac{\tau-\om/|\om|}{2\tau}}}\right)
    \frac{H_{k,N-k}(|\om|\tau,z)\tau^{N-k}d\tau}
    {\sqrt{\tau-\om/|\om|}}\right|
\\
&\le C|z|^{2N}|\om|^{N-k}\!\sqrt{|\om|}.
\end{split}
\ee
Let us expend $\sqrt{r-\om}$ and  the function in brackets  into the finite Taylor series
with respect to $\om/r$. Then
\be\la{I2}
\begin{split}
{\cal I}_2&=\int\limits_{2|\om|}^{\de}H_{k,N-k}(r,z)r^{N-k-1/2}
\\
&\quad\times\Big(d_0+d_1\frac{\om}{r}+\dots
+d_{N-k}\frac{\om^{N-k}}{r^{N-k}}
+\hat d_{N-k}(\om/r)\frac{\om^{N-k}}{r^{N-k}}\Big)dr
\\
&=\int\limits_{2|\omega |}^{\delta }H_{k,N-k}(r,z)
\\
&\quad\times\Big(d_0r^{N-k-1/2}+d_{1}\omega
    r^{N-k-3/2}+\dots
+d_{N-k}\omega ^{N-k}r^{-1/2}\Big)dr+\tilde u_{N-k}(\om,z)
\\
&=\int\limits_{0}^{\delta }\!H_{k,N-k}(r,z)
\\
&\quad\times\Big(d_0r^{N-k-1/2}+d_{1}\omega
    r^{N-k-3/2}+\dots
+d_{N-k}\omega ^{N-k}r^{-1/2}\Big)dr+\hat u_{N-k}(\om,z)
\\
&=\sum\limits_{j=0}^{N-k}u_j(z)\om^j+\hat u_{N-k}(\om,z),
\end{split}
\ee
where
  $|\hat d_{N-k}(\om/r)|\le C,\;~|u_j(z)|\le C|z|^{2N},\;~|\tilde u_{N-k}(\om,z)|,
  \;|\hat u_{N-k}(\om,z)|\le C|z|^{2N}|\om|^{N-k}$.
  Now (\ref{hk-int2})-(\ref{I2}) imply that
  \be\la{Jk}
  J_k(\om,z)=\sum\limits_{j=0}^{N} d_{k,j}(z)\om^j
    +\om^k\sqrt\om\sum\limits_{j=0}^{N-k}c_{k,j}(z)\om^j
    +\tilde d_{N,k}(\om,z),
  \ee
where $|d_{k,j}(z)|\le C|z|^{2N}$, $|c_{k,j}(z)|\le C|z|^{2(k+j)}$
and $|\tilde d_{N,k}(\om,z)|\le C|z|^{2N}|\om|^N$.\\
%%%%%%%%%%%%%%%%%%%%%%%%%%%%%%%%%%%%%%%%%%%%%%%%%%%%%%%%%%%%%%%%%%%%%%%%%%%%%%%%%
{\bf Step ii)}.
  It remains to consider the summand ${\tilde J_N}(\om,z)$ in the RHS of (\ref{JJ}):
  \be\la{JN}
\begin{split}
{\tilde J_N}(\om,z)&=\int\limits_{\Pi}
    \frac{H_{N}(r,\rho_1,z)\sqrt{r+\rho_1}}{r}
\\
&\quad\times\Big(\rho_1^{N-1}+\om\rho_1^{N-2}
    +\dots+\om^{N-1}+\frac{\om^N}{\rho_1-\om}\Big)d\rho_1 d\rho_2
\\
&=\sum\limits_{j=0}^{N-1}w_j(z)\om^j
+\om^{N}\int\limits_{\Pi}
\frac{H_{N}(r,\rho_1,z)\sqrt{r+\rho_1}~d\rho_1 d\rho_2}{(\rho_1-\om)r},
\end{split}
\ee
where $|w_j(z)|\le C|z|^{2N}$.
The following estimate is true
%%%%%%%%%%%%%%%%%%%%%%%%%%%%%%%%%%%%%%%%%%%%%%%%%%%%%%%%%%%%%%%%%%%%%%%%%%%%%%%
\begin{lemma}\la{rem-est1}
  Let $0<|\om|<\de/2$, $\Im\om>0$. Then
  \be\la{Rem-est1}
    |\int\limits_{\Pi}
    \frac{H_{N}(r,\rho_1,z)\sqrt{r+\rho_1}~d\rho_1 d\rho_2}{(\rho_1-\om)r}|
    \le C|z|^{2N}\ln^2|z|,\; |z|>1.
  \ee
\end{lemma}
We shall prove the lemma in Appendix B.\\
%%%%%%%%%%%%%%%%%%%%%%%%%%%%%%%%%%%%%%%%%%%%%%%%%%%%%%%%%%%%%%%%%%%%%%%%%%%%%%%%
{\bf Step iii)}.
  Finally, (\ref{JJ}), (\ref{Jk})-(\ref{Rem-est1}) imply that
  $$
    Q_1(\omega,z)=\sum\limits_{k=0}^{N}
    q_k(z)\omega^k+\sqrt\om\sum\limits_{k=0}^{N}p_k(z)\omega^k
     +\widehat q_{N}(\omega,z),\;|\om|\to 0,
  $$
  where $|\widehat q_{N}(\omega,z)|\le C|z|^{2N}\ln^2|z||\om|^{N}$.
  Further, $p_k(z)={\cal O}(|z|^{2k})$, and $q_k(z)={\cal O}(|z|^{2N})$
  for $0\le k\le N$. Therefore, $q_k(z)={\cal O}(|z|^{2k})$, since $q_k(z)$
  do not depend on $N$.
\end{proof}
%%%%%%%%%%%%%%%%%%%%%%%%%%%%%%%%%%%%%%%%%%%%%%%%%%%%%%%%%%%%%%%%%%%%%%%%%%%%%%%%
\begin{cor}\la{32}
Let $\si>3/2$. Then in ${\cal B}(\sigma,-\sigma)$ the expansion holds:
  \begin{equation}\label{exp-h}
     R_0(4+\om)={\cal O}(1),\;|\om|\to 0,\;\Im\om>0.
  \end{equation}
\end{cor}
\begin{cor}\la{diff2}
The expansion  \eqref{expR0h} can be differentiated. More precisely,
\be\la{pa1}
\partial_\om R_0(4+\om)=\fr{B_0}{2\sqrt\om}+{\cal O}(1),\quad|\om|\to
0,\quad\Im\om>0,
\ee
in ${\cal B}(\sigma,-\sigma)$ with $\sigma>7/2$,
\be\la{pa2}
    \partial_\om^2 R_0(4+\om)=-\fr{B_0}{4\om\sqrt\om}
    +{\cal O}(\om^{-1/2}),\quad|\om|\to 0,\quad\Im\om>0,
\ee
in ${\cal B}(\sigma,-\sigma)$ with $\sigma>11/2$.
\end{cor}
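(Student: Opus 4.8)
The plan is to read off \eqref{pa1} and \eqref{pa2} by differentiating the Puiseux expansion \eqref{expR0h} term by term, and to justify this differentiation by returning to the integral \eqref{R0}, in the spirit of Remark \ref{dif1}. First I would apply Proposition \ref{R0-exphyp} with $N=1$, which gives in ${\cal B}(\sigma,-\sigma)$ with $\sigma>7/2$,
\be
R_0(4+\om)=E_0+E_1\om+\sqrt\om\,(B_0+B_1\om)+{\cal O}(\om^2),
\ee
and with $N=2$, which requires $\sigma>11/2$ and carries one further pair of terms $E_2\om^2+B_2\om^{5/2}$. Differentiating formally yields
\be
\partial_\om R_0(4+\om)=\fr{B_0}{2\sqrt\om}+{\cal O}(1),\qquad
\partial_\om^2 R_0(4+\om)=-\fr{B_0}{4\om\sqrt\om}+{\cal O}(\om^{-1/2}),
\ee
since the regular contributions $E_1$, $2E_2$ and the remaining half-integer powers are absorbed into the remainders. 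Thus the whole content of the corollary is the legitimacy of this term-by-term differentiation, together with the correct weight thresholds $7/2$ and $11/2$.

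To justify the differentiation I would use $\partial_\om(\phi(\theta)-4-\om)^{-1}=(\phi(\theta)-4-\om)^{-2}$ in \eqref{R0}, so that each $\om$-derivative raises the order of the pole of the integrand by one. Retracing the reduction in the proof of Proposition \ref{R0-exphyp} --- the passage to $Q_1$, the cylindrical and polar changes of variables, and the representation \eqref{h-int} --- with the factor $(\rho_1-\om)^{-1}$ replaced by $(\rho_1-\om)^{-(r+1)}$, the analytic integer-power contributions remain analytic in $\om$, and the only non-analytic part continues to originate from the integral $I_l$ of Lemma \ref{IK}. Its singular piece $C_l\om^l\sqrt\om$ differentiates explicitly to the $\om^{-1/2}$ and $\om^{-3/2}$ singularities carried by $B_0$ in \eqref{pa1}--\eqref{pa2}. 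In particular the leading operator $B_0$ is unchanged, so no new limiting-absorption analysis is needed; one only differentiates elementary functions of $\om$.

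The weight bookkeeping explains the jump of the thresholds to $7/2$ and $11/2$. In the hyperbolic expansion the coefficient of $\om^k$, respectively $\om^{k+1/2}$, grows like $|z|^{2k}$, which is precisely why $E_k,B_k\in{\cal B}(\sigma,-\sigma)$ demands $\sigma>2k+3/2$; a single $\om$-differentiation brings in, through the repeated residue computation, two extra powers of $|z|$ (equivalently, it forces one to keep the next coefficient $E_r,B_r$, which lives two weight-units higher), raising the admissible $\sigma$ by $2$. Hence one derivative forces $\sigma>7/2$ to control $E_1,B_1$ and two derivatives force $\sigma>11/2$ to control $E_2,B_2$, consistent with Corollary \ref{32}, where the undifferentiated leading term is already bounded for $\sigma>3/2$. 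The main obstacle is to make the remainder control rigorous: one must re-establish the analogues of Lemma \ref{IK} and Lemma \ref{rem-est1} for the integrands carrying the higher-order poles $(\rho_1-\om)^{-2}$ and $(\rho_1-\om)^{-3}$, verifying that $\partial_\om^r{\cal O}(\om^{N+1})={\cal O}(\om^{N+1-r})$ in the enlarged weighted norm; this is routine but tedious, paralleling Steps i)--iii) of the proof above. A cleaner alternative is to exploit analyticity: by Lemma \ref{PV} the remainder $R_0(4+\om)-\sum_k E_k\om^k-\sqrt\om\sum_k B_k\om^k$ is an analytic ${\cal B}(\sigma,-\sigma)$-valued function on $\Im\om>0$ of size ${\cal O}(\om^{N+1})$, so Cauchy's estimate on a disk of radius $\sim|\om|$ lying in a closed subsector of the upper half-plane yields $\partial_\om^r{\cal O}(\om^{N+1})={\cal O}(\om^{N+1-r})$ at once.
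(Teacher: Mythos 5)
Your overall strategy --- differentiate \eqref{expR0h} term by term and then justify the differentiation --- is reasonable, but both justifications you offer have gaps, and the ``cleaner alternative'' fails at exactly the point where the corollary is needed. A Cauchy estimate requires a disk of radius $\sim|\om|$ centred at $\om$ on which the remainder is analytic and ${\cal O}(\om^{N+1})$; such a disk exists only while $\om$ stays in a closed subsector $\arg\om\in[\ve,\pi-\ve]$, because the domain of analyticity is bounded by the cut $\Sigma$. As $\om$ approaches the real axis the admissible radius shrinks like $\Im\om$, and the resulting bound on $\pa_\om^r$ of the remainder blows up, so the argument does not give \eqref{pa1}--\eqref{pa2} uniformly for $\Im\om>0$, $|\om|\to 0$. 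That uniformity is not a technicality: the corollary is consumed (Remark \ref{rmr}~iii), Theorem \ref{end}) through the boundary values $\pa_\om^k R(\om\pm i0)$ for \emph{real} $\om$, integrated along the cut, where interior-sector estimates say nothing. Your first route --- redoing the proof of Proposition \ref{R0-exphyp} with the higher-order poles $(\rho_1-\om)^{-2}$, $(\rho_1-\om)^{-3}$ and re-proving Lemmas \ref{IK} and \ref{rem-est1} in that setting --- would address this, but it is precisely the substance of the corollary and you defer it as ``routine but tedious''. Note the paper avoids higher-order poles altogether: it differentiates under the integral in \eqref{Q1} and integrates by parts in $\tau$ (formula \eqref{fint}), which keeps the pole simple; the singularity is then carried by the boundary term at $\tau=0$ (the terms $S_1$, resp.\ $U_1$), a one-dimensional integral computed explicitly in \eqref{pq} and \eqref{U3} to produce $B_0/(2\sqrt\om)$ and $-B_0/(4\om\sqrt\om)$, while each integration by parts costs a factor $|z|^2$ in the amplitude --- which is exactly where the thresholds $7/2$ and $11/2$ come from.

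There is also a concrete misapplication of Proposition \ref{R0-exphyp}: its hypothesis is $\sigma>2N+7/2$, so the $N=1$ expansion you invoke for \eqref{pa1} requires $\sigma>11/2$, not $\sigma>7/2$, and the $N=2$ expansion you invoke for \eqref{pa2} requires $\sigma>15/2$, not $\sigma>11/2$. The pairing consistent with the claimed thresholds is: \eqref{pa1} from differentiating the $N=0$ expansion once (valid in ${\cal B}(\sigma,-\sigma)$ for $\sigma>7/2$), and \eqref{pa2} from differentiating the $N=1$ expansion twice ($\sigma>11/2$). With your pairing, even granting term-by-term differentiation, you would prove the corollary only with the weaker thresholds $11/2$ and $15/2$. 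Your closing heuristic --- that $r$ derivatives are governed by the coefficients $E_r,B_r$, which need $\sigma>2r+3/2$ --- does point at the right numbers, but it is not what your citation of the proposition actually delivers, and in any case the heuristic transfers no bound to the differentiated remainder, which is the whole issue.
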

\begin{proof} It is sufficient to obtain the asymptotics of type (\ref{pa1}) and  (\ref{pa2})
for $Q_1$ defined in (\ref{Q1}).
Formula  (\ref{Q1}) implies
\be\la{fint}
\begin{split}
&\pa_\om Q_1(\om,z)=\int\limits_{0}^{\infty}\!\int\limits_{0}^{\infty}\!\frac
    {F_1(z,\tau^2,s^2)\zeta_2(\tau^2+s^2)\tau d\tau ds}
    {(\tau^2-s^2-\om)^2}
\\
&\quad=\int\limits_{0}^{\infty}\!ds\int\limits_{0}^{\infty}
    F_1(z,\tau^2,s^2)\zeta_2(\tau^2+s^2)\pa_\tau\frac{-1/2}{\tau^2\!-\!s^2-\om}d\tau
\\
&\quad=-\fr 12\int\limits_{0}^{\infty}\fr{F_1(z,0,s^2)\zeta_2(s^2)}{s^2+\om}~ds
+\int\limits_{0}^{\infty}\!\int\limits_{0}^{\infty}
\fr{\pa_\tau^2\big(F_1(z,\tau^2,s^2)\zeta_2(\tau^2+s^2)\Big)\tau}
{\tau^2-s^2-\om}~d\tau ds
\\
&\quad=S_1(\om,z)+S_2(\om,z).
\end{split}
\ee
The asymptotics of  $S_2(\om,z)$ are similar to (\ref{exp-h}):
\be\la{S2}
S_2(\om)={\cal O}(1),\quad|\om|\to 0,\quad\Im\om>0,
\ee
in ${\cal B}(\sigma,-\sigma)$ with $\sigma>7/2$.
Let us note, that the differentiation in respect to $\tau^2$
implies extra factors $|z|^2$ and then the value of $\si$
increase  on two units by comparison with (\ref{exp-h}).

Consider $S_1(\om,z)$:
\be\la{S1}
S_1(\om,z)=\int\limits_{0}^{\de}\fr{(F_2(z,s^2)-F_2(z,|\om|))ds}{s^2+\om}
+\int\limits_{0}^{\de}\fr{F_2(z,|\om|)ds}{s^2+\om}
\ee
For the function $F_2(z,s^2)=-F_1(z,0,s^2)\zeta_2(s^2)/2$ the bounds hold:
\be\la{Fz}
|F_2(z,s^2)|\le C,~~~|\pa_{s^2}F_2(z,s^2)|\le C|z|^{2},~~~ |\pa_{s^2}^2 F_2(z,s^2)|\le C|z|^{4}.
\ee
Let us estimate the first integral in the RHS of (\ref{S1}) using the second bound (\ref{Fz}):
$$
\Big|\int\limits_{0}^{\de}\fr{(F_2(z,s^2)-F_2(z,|\om|))ds}{s^2+\om}\Big|
\le C|z|^{2}\int\limits_{0}^{\de}\fr{|s^2-|\om||}{|s^2+\om|}ds\le C|z|^{2}
$$
Let us calculate the second integral in the RHS of (\ref{S1}):
\be\la{pq}
\int\limits_{0}^{\de}\fr{F_2(z,|\om|)ds}{s^2+\om}=F_2(z,|\om|)\fr 1{2\sqrt{-\om}}
\big(\ln\fr{\de-\sqrt{-\om}}{\de+\sqrt{-\om}}-\pi i\big)
=p(z)\fr 1{\sqrt\om}+q(\omega,z),
\ee
where
$$
|p(z)|+|q(\omega,z)|\le C,\quad 0<|\om|<\de/2,~~\Im\om>0.
$$
Therefore,
\be\la{S11}
S_1(\om)=\fr {P_1}{\sqrt\om}+{\cal O}(1),\;|\om|\to 0,\;\Im\om>0
\ee
in ${\cal B}(\sigma,-\sigma)$ with $\sigma>7/2$.
From (\ref{S2}) è (\ref{S11}) the asymptotics for the first derivative follow.
Further, let us consider the second derivative:
\be\la{fin}
\begin{split}
&\pa_\om^2 Q_1(\om,z)=\int\limits_{0}^{\infty}\!\int\limits_{0}^{\infty}\!\frac
{2F_1(z,\tau^2,s^2)\zeta_2(\tau^2+s^2)\tau d\tau ds}
{(\tau^2-s^2-\om)^3}
\\
&\quad=\int\limits_{0}^{\infty}\!ds\!\int\limits_{0}^{\infty}\!\!
F_1(z,\tau^2,s^2)\zeta_2(\tau^2\!+s^2)\pa_\tau\frac {-1/2}{(\tau^2-s^2-\om)^2}d\tau
\\
&\quad=\fr 12\int\limits_{0}^{\infty}\fr{F_1(z,0,s^2)\zeta_2(s^2)}{(s^2+\om)^2}~ds
+\int\limits_{0}^{\infty}\int\limits_{0}^{\infty}
\fr{\pa_{\tau^2}\big(F_1(z,\tau^2,s^2)\zeta_2(\tau^2+s^2)\Big)\tau}
{(\tau^2-s^2-\om)^2}~ d\tau ds.
\\
&\quad=U_1(\om,z)+U_2(\om,z).
\end{split}
\ee
The proof of  asymptotics for $U_2(\om,z)$ is similar to the proof of asymptotics for the first
derivative. We obtain
$$
U_2(\om)=\fr {P_2}{\sqrt\om}+{\cal O}(1),\quad|\om|\to 0,\quad\Im\om>0,
$$
in ${\cal B}(\sigma,-\sigma)$ with $\sigma>11 /2$.
Let us consider $U_1(\om,z)$:
\begin{align*}
&U_1(\om,z)=-\int\limits_{0}^{\de}\fr{F_2(z,s^2)}{(s^2+\om)^2}~ds
\\
&\quad=-\int\limits_{0}^{\de}\fr{F_2(z,s^2)-F_2(z,|\om|)-F'_2(z,|\om|)(s^2-|\om|)}{(s^2+\om)^2}~ds
+\int\limits_{0}^{\de}\fr{F_2(z,|\om|))}{(s^2+\om)^2}~ds
\\
&\quad+\int\limits_{0}^{\de}\fr{F_2'(z,|\om|))}{s^2+\om}~ds
=U_{11}(\om,z)+U_{12}(\om,z)+U_{13}(\om,z).
\end{align*}
The last bound  (\ref{Fz}) implies
\be\la{U1}
U_{11}(\om,z)|\le C|z|^4.
\ee
Further,  we obtain  similar to (\ref{pq})
\be\la{U2}
\begin{split}
U_{13}(\om,z)&=\fr {p_1(z)}{\sqrt\om}+q_1(\om,z),\quad
|p_1(z)|+|q_1(\omega,z)|
\\
&\le C|z|^2,\quad 0<|\om|<\fr{\de^2}2,\enskip \Im\om>0.
\end{split}
\ee
Finally,
\be\la{U3}
\begin{split}
&U_{12}(\om,z)=F_2(z,|\om|)\Big(\fr 1{4\om\sqrt{-\om}}
\big(\pi i-\ln\fr{1+\sqrt{-\om}/\de}{1-\sqrt{-\om}/\de}\big)+\fr{\de}{2\om(\de^2+\om)}\Big)
\\
&=F_2(z,|\om|)\Big(\fr {1}{4\om\sqrt{-\om}}\big(\pi i
-\fr{2\sqrt{-\om}}{\de}+\fr{2\om\sqrt{-\om}}{3\de^3}+\dots\big)
+\fr 1{2\om\de}\Big(1-\fr{\om}{\de^2}+\dots\Big)\Big)
\\
&=\fr {s_2(z)}{\om\sqrt\om}+\fr {p_2(z)}{\sqrt\om}+q_2(\om,z),\quad
|s_2(z)|+|p_2(z)|+|q_2(\omega,z)|\le C,
\\
&\quad 0<|\om|<\fr{\de^2}2,\quad\Im\om>0.
\end{split}
\ee
From (\ref{U1})--(\ref{U3}) the asymptotics for
the second derivative follow.
\end{proof}
%%%%%%%%%%%%%%%%%%%%%%%%%%%%%%%%%%%%%%%%%%%%%%%%%%%%%%%%%%%%%%%%%%%%%%%%%%%%%%%%%%%%%%
%%%%%%%%%%%%%%%%%%%%%%%%%%%%%%%%%%%%%%%%%%%%%%%%%%%%%%%%%%%%%%%%%%%%%%%%%%%%%%%%%%%%%
\setcounter{equation}{0}
\section{Perturbed resolvent }
\label{lap-pert}
%%%%%%%%%%%%%%%%%%%%%%%%%%%%%%%%%%%%%%%%%%%%%%%%%%%%%%%%%%%%%%%%%%%%%%%%%%%%%%%%%%%%%
%%%%%%%%%%%%%%%%%%%%%%%%%%%%%%%%%%%%%%%%%%%%%%%%%%%%%%%%%%%%%%%%%%%%%%%%%%%%%%%%%%%%%
\subsection{Limiting absorption principle}
\label{l-ab-pr}
%%%%%%%%%%%%%%%%%%%%%%%%%%%%%%%%%%%%%%%%%%%%%%%%%%%%%%%%%%%%%%%%%%%%%%%%%%%%%%%%%%%%%
%%%%%%%%%%%%%%%%%%%%%%%%%%%%%%%%%%%%%%%%%%%%%%%%%%%%%%%%%%%%%%%%%%%%%%%%%%%%%%%%%%%%%
For the perturbed resolvent the limiting absorption principle holds.
%%%%%%%%%%%%%%%%%%%%%%%%%%%%%%%%%%%%%%%%%%%%%%%%%%%%%%%%%%%%%%%%%%%%%%%%%%%%%%%%%%
\begin{prop}\label{BV}
  Let $V\in {\cal V}$, $\sigma>3/2$.
  Then  the  limits exist  as $\varepsilon\to 0+$
  \begin{equation}\label{esk}
     R(\omega\pm i\varepsilon)\;\buildrel {\hspace{2mm}B(\sigma,-\sigma)}\over
     {- \hspace{-2mm} \longrightarrow}\; R(\omega\pm i0),\quad
     \omega\in \Si\setminus \{0,4,8,12\}.
  \end{equation}
\end{prop}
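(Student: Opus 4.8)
The plan is to reduce everything to the free limiting absorption principle (Proposition \ref{LAP}) by means of the second resolvent identity and a Fredholm argument. Writing $H=H_0+V$, the identity $R(\om)=R_0(\om)-R(\om)VR_0(\om)$ gives $R(\om)\bigl(I+VR_0(\om)\bigr)=R_0(\om)$, so that formally
\[
  R(\om)=R_0(\om)\bigl(I+VR_0(\om)\bigr)^{-1}.
\]
Since $V\in\cV$ has finite support, multiplication by $V$ is bounded from $l^2_{-\si}$ into $l^2_{\si'}$ for all $\si,\si'$; in particular $V\in\cB(-\si,\si)$, and its range is finite dimensional, so $V$ is of finite rank. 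By Proposition \ref{LAP}, for $\si>3/2$ the operator $R_0(\om\pm i\ve)$ converges in $\cB(\si,-\si)$ as $\ve\to0+$; composing with $V\in\cB(-\si,\si)$ shows that $VR_0(\om\pm i\ve)\to VR_0(\om\pm i0)$ in the operator norm of $\cB(\si,\si)$, the limit being a finite-rank, hence compact, operator on $l^2_\si$.

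For $\ve>0$ the point $\om\pm i\ve$ lies in the resolvent set of the self-adjoint operator $H$, and a direct computation using the resolvent identity gives $(I+VR_0(\om\pm i\ve))^{-1}=I-VR(\om\pm i\ve)$, so $I+VR_0(\om\pm i\ve)$ is invertible on $l^2_\si$. Since $I+VR_0(\om\pm i0)$ is a compact perturbation of the identity, the Fredholm alternative shows that it is invertible on $l^2_\si$ as soon as its kernel is trivial. Granting this, continuity of inversion gives $(I+VR_0(\om\pm i\ve))^{-1}\to(I+VR_0(\om\pm i0))^{-1}$ in $\cB(\si,\si)$, and multiplying by $R_0(\om\pm i\ve)\to R_0(\om\pm i0)$ in $\cB(\si,-\si)$ yields the desired convergence $R(\om\pm i\ve)\to R_0(\om\pm i0)(I+VR_0(\om\pm i0))^{-1}$ in $\cB(\si,-\si)$.

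The heart of the matter, and the step I expect to be the main obstacle, is the triviality of $\Ker(I+VR_0(\om\pm i0))$. I would argue as follows in the $+i0$ case. If $(I+VR_0(\om+i0))\phi=0$, then $\phi=-VR_0(\om+i0)\phi$ is supported in $\supp V$, hence lies in every $l^2_\tau$; setting $\psi=R_0(\om+i0)\phi\in l^2_{-\si}$ we obtain $\phi=-V\psi$ and $(H-\om)\psi=0$, with $\psi$ satisfying the outgoing radiation condition. The flux computation is decisive: on one hand $\langle\phi,\psi\rangle=-\langle V\psi,\psi\rangle\in\R$ because $V$ is real valued; on the other hand, from the Plemelj representation of $R_0(\om+i0)$, the quantity $\Im\langle\phi,R_0(\om+i0)\phi\rangle$ is a positive multiple of $\int_{\{\phi(\theta)=\om\}}|\widehat\phi(\theta)|^2\,dS/|\nabla\phi(\theta)|$. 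Hence $\widehat\phi$ vanishes on the energy surface $\{\phi(\theta)=\om\}$; since $\om$ is not a critical value, $\nabla\phi\ne0$ there, so $\widehat\psi=\widehat\phi/(\phi(\theta)-\om)$ extends smoothly across the surface and $\psi\in l^2$ is a genuine eigenfunction of $H$ with eigenvalue $\om$.

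Because $\om\in(0,12)\setminus\{4,8\}$ is an interior, non-critical value of the symbol and $V$ has finite support, $H$ has no eigenvalue embedded there (a discrete Rellich / unique-continuation argument), so $\psi=0$ and therefore $\phi=-V\psi=0$; the $-i0$ case is identical. This absence of embedded eigenvalues at non-critical energies is precisely where the finiteness of $\supp V$ and the exclusion of the critical points $\{0,4,8,12\}$ enter, and it is the part that requires the most care; the remaining steps are the soft functional-analytic machinery described above.
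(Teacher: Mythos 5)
Your functional-analytic skeleton is exactly the paper's: the identity $R=R_0(I+VR_0)^{-1}$, convergence of $VR_0(\om\pm i\ve)$ in $\cB(\si,\si)$ obtained from the free limiting absorption principle plus the finite support of $V$, the Fredholm alternative for the finite-rank perturbation $I+VR_0(\om\pm i0)$, and continuity of inversion. The one place you diverge is the triviality of the kernel of $I+VR_0(\om\pm i0)$: the paper does not prove this at all, it simply cites \cite[Theorem 10]{ShV}, whereas you attempt a self-contained argument.

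That attempt contains a genuine gap, and it sits exactly where you flagged it. Your flux computation is correct as far as it goes: it shows $\widehat\phi$ vanishes on the surface $\{\phi(\theta)=\om\}$, so that $\psi=R_0(\om\pm i0)\phi$ is a genuine $l^2$ eigenfunction of $H$ with eigenvalue $\om$. But the decisive claim --- that $H=-\Delta+V$ with finitely supported $V$ has no eigenvalues embedded in $(0,12)\setminus\{4,8\}$ --- is asserted, not proven, and ``a discrete Rellich / unique-continuation argument'' is not a routine transfer from the continuum. The unique-continuation half is indeed elementary on the lattice (a half-space induction shows no eigenfunction of $-\Delta+V$ on $\Z^3$ can have finite support), but the Rellich half --- that an $l^2$ solution of the exterior equation at a non-critical energy must have compact support --- is a substantial theorem in the discrete setting: the level sets of $\phi(\theta)$ are neither spheres nor convex (for $\om\in(4,8)$ they contain whole straight lines, e.g.\ $\{(\theta_1,\theta_2^0,\pi-\theta_1)\}$ with $\cos\theta_2^0=(6-\om)/2$), so the curvature-based continuum proofs do not apply. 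This missing step is essentially equivalent in difficulty to the result of Shaban--Vainberg that the paper invokes; your proof becomes complete if you close it by citing \cite[Theorem 10]{ShV} (or a discrete Rellich theorem of Isozaki--Morioka type), but not by the parenthetical appeal alone.
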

%%%%%%%%%%%%%%%%%%%%%%%%%%%%%%%%%%%%%%%%%%%%%%%%%%%%%%%%%%%%%%%%%%%%%%%%%%%%%%%%%%%
\begin{proof}
Let $\omega\in\Si\setminus \{0,4,8,12\}$ and $\sigma>3/2$.
Since the potential $V$ has a finite support, then
Proposition \ref{LAP} implies
\begin{equation*}
    I+VR_0(\omega\pm i\varepsilon)\;\buildrel {\hspace{2mm}\mathcal
    B(\sigma,\sigma)}\over
    {- \hspace{-2mm} \longrightarrow}\;I+VR_0(\omega\pm i0),\quad
    \varepsilon\to 0+.
  \end{equation*}
The operator $I+VR_0(\omega\pm i0)$ has only a trivial kernel (see \cite[Theorem 10]{ShV}),
and the operator $VR_0(\omega\pm i0)$ is finite dimensional. Hence, the operator
$I+VR_0(\omega\pm i0)$ is invertible, and moreover
\begin{equation}\la{sh}
   \bigl(I+VR_0(\omega\pm i\varepsilon)\bigr)^{-1}
   \;\buildrel {\hspace{2mm}\mathcal B(\sigma,\sigma)}\over
   {- \hspace{-2mm} \longrightarrow}\;\bigl(I+VR_0(\omega\pm i0)
   \bigr)^{-1},\quad\varepsilon\to 0+.
\ee
Further, (\ref{lim}), (\ref{sh}) and identity  $R=R_0(I+VR_0)^{-1}$ imply the existence
of the limits \eqref{esk}.
\end{proof}
%%%%%%%%%%%%%%%%%%%%%%%%%%%%%%%%%%%%%%%%%%%%%%%%%%%%%%%%%%%%%%%%%%%%%%%%%%%%%%%%%%%%%
\begin{remark}\label{diff3}
 For $\omega\in \Si\setminus \{0,4,8,12\}$ the derivatives
 $\partial_\om^kR(\omega\pm i0)$ belong to ${\cal B}(\sigma,-\sigma)$ with $\sigma>3/2+k$.
\end{remark}
\begin{proof}
The statement follows from Proposition \ref{LAP} and the identity
(see \cite[Theorem 9.2]{jeka})
$$
R^{(k)}=\Big[(1-RV)R_0^{(k)}-\left(\begin{array}{c} k \\ 1\end{array}\right)
R'VR_0^{k-1}-...\Big](1-VR),\quad k=1,2,...
$$
\end{proof}
%%%%%%%%%%%%%%%%%%%%%%%%%%%%%%%%%%%%%%%%%%%%%%%%%%%%%%%%%%%%%%%%%%%%%%%%%%%%%%%%%%%%
%%%%%%%%%%%%%%%%%%%%%%%%%%%%%%%%%%%%%%%%%%%%%%%%%%%%%%%%%%%%%%%%%%%%%%%%%%%%%%%%%%%%%%
\subsection{Asymptotics near critical points}
\label{puispert-sect}
%%%%%%%%%%%%%%%%%%%%%%%%%%%%%%%%%%%%%%%%%%%%%%%%%%%%%%%%%%%%%%%%%%%%%%%%%%%%%%%%%%%%%%%
%%%%%%%%%%%%%%%%%%%%%%%%%%%%%%%%%%%%%%%%%%%%%%%%%%%%%%%%%%%%%%%%%%%%%%%%%%%%%%%%%%%%%%%
In this  sections we are going to obtain an asymptotic expansion
for the perturbed resolvent $R(\omega)$ near the critical points $\om_k$, $k=1,2,3,4$.
\begin{definition}\label{gener-def}
i) A set $\cal W\subset\cal V$ is called generic, if for each $V\in\cal V$
we have $\alpha V\in\cal W$, with the possible exception of a discrete set of $\alpha\in\R$.
\\
ii) We say that a property holds for a ``generic'' $V$,
if it holds for all $V$ from a generic subset  $\cal W\subset\cal V$.
\end{definition}
%%%%%%%%%%%%%%%%%%%%%%%%%%%%%%%%%%%%%%%%%%%%%%%%%%%%%%%%%%%%%%%%%%%%%%%%%%%%%%%%%%
\begin{theorem}\label{resas}
  Let $\sigma >3/2$. Then for ``generic''  $V\in {\cal V}$ the following
  expansion holds:
  \begin{equation}\label{Asymp1}
    R(\om)=D_1+{\cal O}(\sqrt\om),\;~~~|\om|\to 0,\;\arg\om\in (0,2\pi)
  \end{equation}
  in the norm of ${\cal B}(\sigma,\,-\sigma)$.
\end{theorem}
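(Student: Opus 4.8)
The plan is to start from the factorization $R=R_0(I+VR_0)^{-1}$ already used in the proof of Proposition~\re{BV}, and to transfer the elliptic Puiseux expansion of the \emph{free} resolvent to the perturbed one. Taking $N=0$ in Proposition~\re{R0-expell} gives
\be
   R_0(\om)=A_0+{\cal O}(\sqrt\om),\quad|\om|\to 0,\;\arg\om\in(0,2\pi),
\ee
in $\cB(\si,-\si)$ for $\si>3/2$, with $A_0\in\cB(\si,-\si)$. Since $V\in\cal V$ has finite support, multiplication by $V$ maps $l^2_{-\si}$ boundedly into $l^2_{\si}$ for every $\si$, so that $VR_0(\om)=VA_0+{\cal O}(\sqrt\om)$ in $\cB(\si,\si)$; moreover $VA_0$ is a \emph{finite-rank} operator, its range being supported on the finitely many lattice points of $\supp V$. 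The whole matter therefore reduces to the invertibility of $I+VR_0(\om)$ near $\om=0$: if $I+VA_0$ is invertible in $\cB(\si,\si)$, then for $|\om|$ small enough $I+VR_0(\om)$ is a small perturbation of it, and a Neumann series yields $(I+VR_0(\om))^{-1}=(I+VA_0)^{-1}+{\cal O}(\sqrt\om)$ in $\cB(\si,\si)$.

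The main point, and the only place where the hypothesis ``generic'' is used, is the invertibility of $I+VA_0$; this is what I expect to be the crux. Because $VA_0$ has finite rank $\le N=\#\supp V$, the operator $I+VA_0$ is a finite-rank perturbation of the identity, hence invertible precisely when $-1$ is not an eigenvalue of $VA_0$, i.e.\ when the determinant of $I+VA_0$ restricted to $\mathrm{ran}\,V$ is nonzero. Since $A_0$ does not depend on $V$, replacing $V$ by $\al V$ makes $\det(I+\al VA_0)\big|_{\mathrm{ran}\,V}$ a polynomial in $\al$ of degree $\le N$ whose value at $\al=0$ equals $1$; it is therefore not identically zero and vanishes only on a finite, in particular discrete, set of $\al\in\R$. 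By Definition~\re{gener-def} this is exactly the statement that $I+VA_0$ is invertible for generic $V$. Conceptually this excludes a zero-energy resonance or eigenvalue at the elliptic threshold $\om_1=0$.

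Once this invertibility is secured, the conclusion is routine bookkeeping. Multiplying the two expansions,
\be
   R(\om)=R_0(\om)\bigl(I+VR_0(\om)\bigr)^{-1}
   =\bigl(A_0+{\cal O}(\sqrt\om)\bigr)\bigl((I+VA_0)^{-1}+{\cal O}(\sqrt\om)\bigr),
\ee
and keeping track of the weights, $A_0\in\cB(\si,-\si)$ composed with $(I+VA_0)^{-1}\in\cB(\si,\si)$, gives $R(\om)=D_1+{\cal O}(\sqrt\om)$ in $\cB(\si,-\si)$ with $D_1=A_0(I+VA_0)^{-1}$. Each error term is controlled because every factor is bounded between the appropriate weighted spaces and $V$ is smoothing, so the products of the remainders are again ${\cal O}(\sqrt\om)$ in $\cB(\si,-\si)$. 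No genuine difficulty arises beyond the invertibility step; the rest is the transfer of the free expansion through the second resolvent identity.
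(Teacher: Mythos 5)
Your proof is correct, and its skeleton is the same as the paper's: take the $N=0$ elliptic expansion $R_0(\omega)=A_0+{\cal O}(\sqrt\omega)$ from Proposition \ref{R0-expell}, transfer it through a resolvent identity, prove invertibility of the limiting operator for ``generic'' $V$, and finish with a Neumann series. The genuine difference is in how the crux --- the generic invertibility --- is established. The paper factorizes on the other side, $R(\omega)=T^{-1}(\omega)R_0(\omega)$ with $T(\omega)=I+R_0(\omega)V$ acting on $l^2_{-\sigma}$, conjugates $T(0)=I+A_0V$ by the weights $(1+x^2)^{\pm\sigma/2}$ so that the perturbation becomes a Hilbert--Schmidt (hence compact) operator ${\cal K}$ on $l^2$, and then invokes the analytic Fredholm alternative for the family $1+\alpha{\cal K}$ (citing Bleher) to get invertibility outside a discrete set of $\alpha$. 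You instead work with $R=R_0(I+VR_0)^{-1}$ and exploit the finite rank of $VA_0$ directly: the space of sequences supported on $\supp V$ is finite-dimensional and invariant, so $\det\bigl((I+\alpha VA_0)\big|_{\mathrm{ran}\,V}\bigr)$ is a polynomial in $\alpha$ equal to $1$ at $\alpha=0$, hence nonzero outside a finite set, and the Fredholm alternative for finite-rank perturbations converts this into invertibility on $l^2_\sigma$. Your route is more elementary and self-contained: it needs no compactness argument or external citation, and it yields a finite (not merely discrete) exceptional set of $\alpha$, with an explicit bound by $\#\supp V$ on its cardinality. What the paper's compactness/analyticity argument buys is robustness: it would survive for potentials that merely decay rather than have finite support, where $A_0V$ is compact but no longer finite rank. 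Note finally that the two leading terms agree, since $A_0(I+VA_0)^{-1}=(I+A_0V)^{-1}A_0=D_1$.
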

%%%%%%%%%%%%%%%%%%%%%%%%%%%%%%%%%%%%%%%%%%%%%%%%%%%%%%%%%%%%%%%%%%%%%%%%%%%%%%%%%%%
\begin{proof}
  We use the relation
  \be\la{T-def}
    R(\om)=T^{-1}(\om)R_{0}(\om),\quad \mbox {\rm where} \quad T(\om):=I+R_{0}(\om)V.
  \ee
  According to  \eqref{expR0},
  \be\la{T-rep}
    T(\om)=I+A_{0}V +O(\sqrt\om),\quad |\om|\to 0,\quad\arg\om\in (0,2\pi).
  \end{equation}
  Let us prove that for ``generic''  $V\in {\cal V}$ the operator $T(\om)$ is invertible
  in $l^2_{-\si}$ for sufficient small $|\om|>0$.
  It is suffices to prove that  the operator
  $ T(0)=I+A_{0}V$ is invertible in $l^2_{-\si}$,
  or the operator with the kernel
  $$
    (1+x^2)^{-\sigma/2}(\delta(x-y)+A_{0}V(y))(1+y^2)^{\sigma/2}
  $$
  is invertible in $l^2$. Let us consider the operator
  $$
    {\cal A}(\alpha)={\rm Op}[(1+x^2)^{-\sigma/2}
    \bigl(\delta(x-y)+\alpha A_{0}V(y)\bigr)(1+y^2)^{\sigma/2}]
    =1+\alpha {\cal K},~~~~~~~\;\alpha\in\C.
  $$
  For $\sigma>3/2$
  $$
    K(x,y)=(1+x^2)^{-\sigma/2}A_{0}V(y)(1+y^2)^{\sigma/2}\in l^2(\Z^2\times\Z^2).
  $$
  Hence,  $K(x,y)$ is a Hilbert-Schmidt kernel, and accordingly the operator
  ${\cal K}\!=\!{\rm Op}(K(x,y))$: $l^2\to l^2$ is compact.
  Further,
  ${\cal A}(\alpha)$ is analytic in $\alpha\in\C$, and ${\cal A}(0)$ is invertible.
  It follows that ${\cal A}(\alpha)$ is invertible for all
  $\alpha\in\C$ outside a discrete set; see \cite{Bl}.
  Thus we could replace the original potential $V$ by
  $\alpha V$ with $\alpha$ arbitrarily close to $1$,
  if necessary, to have $T(0)$ invertible.

  Now (\ref{T-def}) and (\ref{T-rep}) imply that for sufficiently small  $|\om|>0$
  \be\label{rw}
    R(\om) =(I+T(0)+{\cal O}(\sqrt\om))^{-1}(A_{0}+{\cal O}(\sqrt\om))
    =T(0)^{-1}A_{0}+O(\sqrt\om).
  \ee
\end{proof}
%%%%%%%%%%%%%%%%%%%%%%%%%%%%%%%%%%%%%%%%%%%%%%%%%%%%%%%%%%%%%%%%%%%%%%%%%%%%%%%%%%%%%%%%%%
\begin{remark}\la{rmr}
i) The expansion of the resolvent near the second elliptic point
   $\om_4= 12$ is similar to the expansion (\ref{Asymp1}).\\
ii)  The expansion of type (\ref{Asymp1}) near the hyperbolic points $\om_2=4$ and $\om_3=8$
   require larger value of $\sigma$. Namely, for ``generic''  $V\in {\cal V}$
$$
R(\om_k+\om)=D_k+{\cal O}(\sqrt\om),\quad |\om|\to 0,\quad\Im\om>0,\quad k=1,2
$$
in the norm of ${\cal B}(\si,\,-\si)$ with $\si>7/2$.\\
iii) These expansion can be differentiated two times in ${\cal B}(\si,\,-\si)$ with
$\si>5/2$ for elliptic points and with $\si>11/2$ for hyperbolic points.
In these cases $\pa^2_\om R(\om_k+\om)={\cal O}(\om^{-3/2})$, $k=1,2,3,4$.
\end{remark}
%%%%%%%%%%%%%%%%%%%%%%%%%%%%%%%%%%%%%%%%%%%%%%%%%%%%%%%%%%%%%%%%%%%%%%%%%%%%%%%%%%%%
%%%%%%%%%%%%%%%%%%%%%%%%%%%%%%%%%%%%%%%%%%%%%%%%%%%%%%%%%%%%%%%%%%%%%%%%%%%%%%%%%%%%%%%
%%%%%%%%%%%%%%%%%%%%%%%%%%%%%%%%%%%%%%%%%%%%%%%%%%%%%%%%%%%%%%%%%%%%%%%%%%%%%%%%%%%%%%%%
\setcounter{equation}{0}
\section{Long-time asymptotics}
\label{lt-as}
%%%%%%%%%%%%%%%%%%%%%%%%%%%%%%%%%%%%%%%%%%%%%%%%%%%%%%%%%%%%%%%%%%%%%%%%%%%%%%%%%%%%%%%%%%%
%%%%%%%%%%%%%%%%%%%%%%%%%%%%%%%%%%%%%%%%%%%%%%%%%%%%%%%%%%%%%%%%%%%%%%%%%%%%%%%%%%%%%%%%%%%
Now we  apply Lemma \ref{jkl} below, which is a version of Lemma 10.2 from \cite{jeka}
to prove the following theorem
\begin {theorem}\label{end}
Let $\sigma>11/2$. Then  for ``generic''  $V\in {\cal V}$
the following asymptotics  hold
  \be\la{full2}
    \norm{e^{-itH}-\sum\limits_{j=1}^n e^{-it\mu_j} P_j}
    {B(\sigma,-\sigma)}={\cal O}(t^{-3/2}),\quad t\to\infty.
  \ee
Here  $P_j$ denote the projections on the eigenspaces
corresponding to the eigenvalues $\mu_j\in\R\setminus [0,12],\;j=1,\dots,n$.
\end{theorem}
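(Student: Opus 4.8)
The plan is to represent the propagator via the Stone formula, reducing the long-time decay to a stationary-phase-type estimate on the boundary values of the resolvent. Writing $e^{-itH}$ as a spectral integral, I would use the identity
\be
  e^{-itH} - \sum_{j=1}^n e^{-it\mu_j}P_j
  = \frac{1}{2\pi i}\int_0^{12} e^{-it\om}
    \Big(R(\om+i0)-R(\om-i0)\Big)\,d\om,
\ee
valid in $\cB(\sigma,-\sigma)$ since the eigenvalue contributions $\mu_j\in\R\setminus[0,12]$ are exactly the discrete part of the spectrum subtracted off on the left. The task then reduces to proving that this integral over the continuous spectrum $[0,12]$ decays like $t^{-3/2}$ in the operator norm $\cB(\sigma,-\sigma)$.

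The decay is governed entirely by the behaviour of the boundary resolvent $R(\om\pm i0)$ at the four critical points $\om_k=0,4,8,12$, where stationary phase produces the slowest decay; away from these points the limiting absorption principle (Proposition \ref{BV} and Remark \ref{diff3}) gives an analytic, twice-differentiable integrand whose contribution decays faster than any power by repeated integration by parts. Near each critical point I would invoke the Puiseux expansions established above — Theorem \ref{resas} and Remark \ref{rmr} at the elliptic points $\om_1=0$, $\om_4=12$, and the hyperbolic analogue $R(\om_k+\om)=D_k+\cO(\sqrt\om)$ — together with the fact (Remark \ref{rmr}(iii)) that these expansions can be differentiated twice with $\pa_\om^2 R(\om_k+\om)=\cO(\om^{-3/2})$ in the required weighted spaces. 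The essential mechanism is that the leading singular term in the jump $R(\om+i0)-R(\om-i0)$ across the cut behaves like $\sqrt{\om-\om_k}$, and it is precisely a square-root branch singularity of this type that yields $t^{-3/2}$ decay under oscillatory integration.

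The concrete tool I would apply is Lemma \ref{jkl} (the cited version of Lemma 10.2 from \cite{jeka}), which converts the analytic and differentiability hypotheses on the integrand into the sharp $t^{-3/2}$ bound: one checks that on each subinterval around a critical point the integrand has the form (smooth part) plus ($\sqrt\om$ times smooth part), with two derivatives controlled in $\cB(\sigma,-\sigma)$ for $\sigma>11/2$, and the lemma delivers the decay rate. The requirement $\sigma>11/2$ is dictated by the worst case: the twice-differentiated resolvent at a \emph{hyperbolic} point lives in $\cB(\sigma,-\sigma)$ only for $\sigma>11/2$, as recorded in Corollary \ref{diff2} and Remark \ref{rmr}(iii), so this is the binding loss of weight and forces the threshold in the statement.

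The main obstacle is the careful bookkeeping at the hyperbolic points. Unlike the elliptic case, where the expansion is a clean power series in $\sqrt\om$, the hyperbolic singularity mixes analytic terms with $\sqrt\om$-terms and its second derivative blows up like $\om^{-3/2}$; one must verify that, after subtracting the smooth (analytic) part whose oscillatory integral decays rapidly, the genuinely singular $\sqrt\om$-contribution is integrable against $e^{-it\om}$ and produces exactly $t^{-3/2}$ rather than a slower rate. This requires partitioning $[0,12]$ by a smooth cutoff around each $\om_k$, matching the local expansions to the hypotheses of Lemma \ref{jkl}, and confirming that the four critical points contribute additively at the same $t^{-3/2}$ rate without any of them degrading the exponent — and that the weighted-norm constants remain finite for $\sigma>11/2$, which is exactly where the preceding resolvent estimates were tuned to apply.
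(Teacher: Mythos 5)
Your proposal is correct and follows essentially the same route as the paper: the same jump-of-the-resolvent representation of $e^{-itH}-\sum_j e^{-it\mu_j}P_j$ as an oscillatory integral over $[0,12]$, localization by smooth cutoffs at the critical points $\om_k$, the Puiseux expansions (Theorem \ref{resas}, Remark \ref{rmr}, Corollary \ref{diff2}) to verify $\pa_\om^k P(\om)={\cal O}(\pa_\om^k\sqrt{\om-\om_k})$ for $k\le 2$, and Lemma \ref{jkl} to convert this into the $t^{-3/2}$ bound, with $\sigma>11/2$ forced by the twice-differentiated expansion at the hyperbolic points. The only slight overstatement is that away from the critical points the contribution decays ``faster than any power'': in the weighted spaces only finitely many derivatives of $R(\om\pm i0)$ are available (Remark \ref{diff3}), but two derivatives suffice there, so this does not affect the argument.
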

%%%%%%%%%%%%%%%%%%%%%%%%%%%%%%%%%%%%%%%%%%%%%%%%%%%%%%%%%%%%%%%%%%%%%%%%%%%%%%%%%
\begin{proof}
The estimate \eqref{full2} is based on the formula
  \begin{equation}\label{Decay}
    e^{-itH}=-\displaystyle\frac 1{2\pi i}\oint\limits_{|\omega|=C} e^{-it\omega}
    R(\om)d\omega,\;C>\max\{12;|\mu_j|,\,j=1,...,n\}.
  \end{equation}
  The integral above is equal to the sum of residues at the poles of $R(\om)$
and the integral over the contour around the segment [0,12], i.e.
\begin{align*}
e^{-itH}-\sum\limits_{j=1}^n e^{-it\mu_j} P_j
    &= \displaystyle\frac 1{2\pi i}\int\limits_{[0, 12]}
    e^{-it\omega}( R(\omega+i0)- R(\omega-i0))\,d\omega
\\
&=\int\limits_{[0,12]} e^{-it\omega} P(\omega)d\omega.
\end{align*}
The main contribution into the long-time asympotics gives the integrals
 over the neighbourhoods of  the critical points.
 For example, let us consider the integral over the neighbourhood of  the  point
 $\om_1=0$. Expansion  \eqref{Asymp1} and  Remark \ref{rmr} imply
  \begin{equation}\label{P-asy}
    \pa^k P(\omega)={\cal O}(\pa^k\sqrt\omega),\;\omega\to +0,\;\omega\in\R,
   \; k=0,1,2
  \end{equation}
 in  ${B(\sigma,-\sigma)}$ with $\si>11/2$.
 The following result is a special case of \cite[Lemma 10.2]{jeka}.
\begin{lemma}\label{jkl} Assume ${\cal B}$ is a Banach space, $a>0$,
and $F\in C(0, a; {\cal B})$ satisfies $F(0)=F(a)=0$, $F'\in L^1(0,a; {\cal B})$,
as well as $F''(\omega)={\cal O}(\omega^{-3/2})$ as $\omega\to 0$.
Then
\[ \int\limits_{[0,a]} e^{-it\omega}F(\omega)d\omega ={\cal O}(t^{-3/2}),
   \quad t\to\infty. \]
\end{lemma}
\bigskip
Set $F(\om)=\zeta(\omega)P(\omega)$, where $\zeta$ is the smooth function,
$\zeta(\om)=1$ for $\om\in [-1/2,1/2]$,
$\supp\zeta\in(-1,1)$; $a=1$, ${\cal B}=B(\si,-\si)$ with $\si>11/2$.
Then due to (\ref{P-asy})  we can apply Lemma \ref{jkl} to get
$$
\int\limits_{[0,1]} e^{-it\om}\zeta(\om)P(\om)={\cal O}(t^{-3/2}),\quad t\to\infty,
$$
in  ${B(\sigma,-\sigma)}$ with $\si>11/2$.

The integrals over  the neighborhoods of  the other critical points
can be estimated similarly.
\end{proof}

%%%%%%%%%%%%%%%%%%%%%%%%%%%%%%%%%%%%%%%%%%%%%%%%%%%%%%%%%%%%%%%%%%%%%%%%%%%%%%%%%%%%%%
%%%%%%%%%%%%%%%%%%%%%%%%%%%%%%%%%%%%%%%%%%%%%%%%%%%%%%%%%%%%%%%%%%%%%%%%%%%%%%%%%%%%%%
\setcounter{equation}{0}
\section{Klein-Gordon equation}
\label{KG-sect}
%%%%%%%%%%%%%%%%%%%%%%%%%%%%%%%%%%%%%%%%%%%%%%%%%%%%%%%%%%%%%%%%%%%%%%%%%%%%%%%%%%%%%%
%%%%%%%%%%%%%%%%%%%%%%%%%%%%%%%%%%%%%%%%%%%%%%%%%%%%%%%%%%%%%%%%%%%%%%%%%%%%%%%%%%%%%%
Now we extend the results of Sections \ref{lap-pert}-\ref{lt-as}
to the case of the Klein-Gordon equation \eqref{KGE}.
Denote ${\bf\Psi}(t)\equiv \bigl(\psi(\cdot,t),\dot\psi(\cdot,t)\bigr)$,
${\bf\Psi}_0 \equiv \bigl(\psi_0,\pi_0\bigr)$. Then \eqref{KGE} becomes
$$
i\dot{\bf\Psi}(t)={\bf H \Psi}(t)=\left(
  \begin{array}{cc}
      0                 &   i\\
      i(\Delta-m^2-V)   &   0
  \end{array}\right){\bf\Psi}(t),\quad t\in\R;\quad{\bf\Psi}(0)={\bf\Psi}_0,
$$
The resolvent ${\bf R}(\om)$ can be expressed in term of the resolvent
$R(\om)$ as
\begin{equation}\label{KGE4}
    {\bf R}(\omega)=
    \left(
    \begin{array}{cc}
            \omega R(\omega^2-m^2)           &   iR(\omega^2-m^2)
    \\
         -i(1 +\omega^2 R(\omega^2-m^2))      &   \omega R(\omega^2-m^2)
     \end{array}
     \right),\quad \omega^2-m^2\in\C\setminus [0,12].
   \end{equation}
%%%%%%%%%%%%%%%%%%%%%%%%%%%%%%%%%%%%%%%%%%%%%%%%%%%%%%%%%%%%%%%%%%%%%%%%%%%%%
\noindent
Representation  \eqref{KGE4} and the properties of $R(\om)$ imply
the following long time asymptotics:

Let $\sigma>11/2$ and ${\bf\Psi}_0\in l^2_\si\oplus l^2_\si$.
Then for ``generic''  $V\in {\cal V}$
\begin{equation*}
   \norm{\,e^{-it{\bf H}}{\bf\Psi}_0-\sum\limits_{\pm}\sum\limits_{j=1}^n
   e^{-it\nu_j^{\pm}}{\bf P}_j^{\pm}{\bf\Psi}_0}
   {l^2_{-\si}\oplus l^2_{-\si}}={\cal O}(t^{-3/2}),\quad t\to\infty.
\end{equation*}
Here  ${\bf P}_j^{\pm}$ are the projections onto the eigenspaces
corresponding to the eigenvalues
$\nu_j^{\pm}=\pm\sqrt{m^2+\mu_j}$, $j=1,\ldots, n$.
%%%%%%%%%%%%%%%%%%%%%%%%%%%%%%%%%%%%%%%%%%%%%%%%%%%%%%%%%%%%%%%%%%%%%%%%%%%
%%%%%%%%%%%%%%%%%%%%%%%%%%%%%%%%%%%%%%%%%%%%%%%%%%%%%%%%%%%%%%%%%%%%%%%%%%%%%%%%%%%%%
\section{Appendix A}
%%%%%%%%%%%%%%%%%%%%%%%%%%%%%%%%%%%%%%%%%%%%%%%%%%%%%%%%%%%%%%%%%%%%%%%%%%%%%%%%%%%%%%%
%%%%%%%%%%%%%%%%%%%%%%%%%%%%%%%%%%%%%%%%%%%%%%%%%%%%%%%%%%%%%%%%%%%%%%%%%%%%%%%%%%%%%%%%
Let us prove Lemma \ref{IK} by induction. For $l=0$ we get
\begin{align*}
I_0&=\int\limits_0^{\de}
 \left(\pi i-\log\frac{1-\sqrt{\frac{r-\om}{2r}}}{1+\sqrt{\frac{r-\om}{2r}}}\right)
  \frac{dr}{\sqrt{r-\om}}
\\
&=2\sqrt{r-\om}
  \left(\pi i-\log\frac{1-\sqrt{\frac{r-\om}{2r}}}{1+\sqrt{\frac{r-\om}{2r}}}\right)
  \Big|_0^{\de}+2\sqrt{2}\om\int\limits_0^{\de}\frac{dr}{\sqrt{r}(r+\om)}
\\
&=2\sqrt{\de-\om}\left(\pi i-
  \log\frac{1-\sqrt{\frac{\de-\om}{2\de}}}{1+\sqrt{\frac{\de-\om}{2\de}}}\right)
  -i2\sqrt{2\om}\log\frac{\sqrt{r}-i\sqrt{\om}}{\sqrt{r}+i\sqrt{\om}}\Big|_0^{\de}
\\
&=\ti s_0(\om)
  -i2\sqrt{2\om}\log\frac{1-i\sqrt{\frac{\om}{\de}}}{1+i\sqrt{\frac{\om}{\de}}}
  -\pi\sqrt{2\om}=s_{0}(\om)+C_0\sqrt{\om},
\end{align*}
where $\ti s_{0}$, $s_{0}$ are the analytic functions of  $\om$, $C_0=-\pi\sqrt 2$.

Further, for $l\ge 1$ we get
\begin{align*}
I_l&=\int_0^{\de}\!
  \left(\pi i-\log\frac{1-\sqrt{\frac{r-\om}{2r}}}{1+\sqrt{\frac{r-\om}{2r}}}\right)
  \frac{r^ldr}{\sqrt{r-\om}}=2\de^l\sqrt{\de-\om}
  \left(\pi i-\log\frac{1-\sqrt{\frac{\de-\om}{2\de}}}
  {1+\sqrt{\frac{\de-\om}{2\de}}}\right)
\\
&-2l\!\int\limits_0^{\de}\!\frac{r^{l-1}(r-\om)}{\sqrt{r-\om}}
  \left(\pi i-\log\frac{1-\sqrt{\frac{r-\om}{2r}}}{1+\sqrt{\frac{r-\om}{2r}}}\right)dr
  +2\sqrt{2}\om\int\limits_0^{\de}\frac{r^{l}dr}{\sqrt{r}(r+\om)}
\\
&=\ti s_l(\om)-2lI_l+2l\om I_{l-1}
\\
&+2\sqrt{2}\om\int\limits_0^{\de}\frac{dr}{\sqrt r}
  \Big(r^{l-1}-\om r^{l-2}+\dots+(-\om)^{l-1}+\frac{(-\om)^{l}}{r+\om}\Big)
\\
&=\ti s_l(\om)-2lI_l+2l\om I_{l-1}+\ti{\ti s}_l(\om)
\\
&-i2\sqrt{2\om}(-\om)^{l}
  \log\frac{1-i\sqrt{\frac{\om}{\de}}}{1+i\sqrt{\frac{\om}{\de}}}
  -2\pi\sqrt{2\om}(-\om)^{l},
\end{align*}
where $\ti s_l$, $\ti{\ti s}_l$  are the analytic functions of $\om$.
Hence,  $I_l=s_{l}(\om)+C_l\om^l\sqrt{\om}$,
where $s_{l}$ are the analytic functions of $\om$ and $C_l\in\R$.
%%%%%%%%%%%%%%%%%%%%%%%%%%%%%%%%%%%%%%%%%%%%%%%%%%%%%%%%%%%%%%%%%%%%%%%%%%%%%%%
%%%%%%%%%%%%%%%%%%%%%%%%%%%%%%%%%%%%%%%%%%%%%%%%%%%%%%%%%%%%%%%%%%%%%%%%%%%%%%%%%%%%%%%%%%%%%%
\section{Appendix B}
%%%%%%%%%%%%%%%%%%%%%%%%%%%%%%%%%%%%%%%%%%%%%%%%%%%%%%%%%%%%%%%%%%%%%%%%%%%%%%%%%%%%%%%
%%%%%%%%%%%%%%%%%%%%%%%%%%%%%%%%%%%%%%%%%%%%%%%%%%%%%%%%%%%%%%%%%%%%%%%%%%%%%%%%%%%%%%
Here we prove Lemma \ref{rem-est1}.
We estimate only the integral over $\Pi_+=\{0\le\rho_1,\rho_2\le\de\}$.
The integral over $\Pi\setminus\Pi_+$ can be estimated similarly.
Let us split the integral over $\Pi_+$ into two integrals:
\begin{align*}
&\int\limits_{\Pi_+}\frac{H_{N}(r,\rho_1,z)\sqrt{r+\rho_1}
  ~d\rho_1 d\rho_2}{(\rho_1-\om)r}
\\
&\qquad=\int\limits_{\Pi_+} \frac{(H_{N}(r,\rho_1,z)-H_{N}(r,|\om|,z))\sqrt{r+\rho_1}
  d\rho_1 d\rho_2}{(\rho_1-\om)r}
\\
&\qquad+\int\limits_{\Pi_+}
  \frac{H_{N}(r,|\om|,z)\sqrt{r+\rho_1}d\rho_1 d\rho_2}{(\rho_1-\om)r}
  =J_1+J_2.
\end{align*}
Similar to (\ref{tan}) we obtain
$$
  J_2=\int\limits_0^{\de}
  \left(\pi i-\log\frac{1-\sqrt{\frac{r-\om}{2r}}}{1+\sqrt{\frac{r-\om}{2r}}}\right)
  \frac{H_{N}(r,|\om|,z)dr}{\sqrt{2(r-\om)}}.
$$
Note that
$$
  \Big|\log\Big|\frac{\sqrt{2r}-\sqrt{r-\om}}{\sqrt{2r}+\sqrt{r-\om}}\Big|\Big|
  =\Big|\log\Big|\frac{r+\om}{(\sqrt{2r}+\sqrt{r-\om})^2}\Big|\Big|
$$
$$
  \le |\log|r+\om||+2|\log|\sqrt{2r}+\sqrt{r-\om}||
  \le 2|\log|r-|\om|||.
$$
Then \eqref{prp} implies
$$
  |J_2|\le C|z|^{2N}\int\limits_0^{\de}\frac{1+|\log|r-|\om|||}{|\sqrt{r-|\om|}|}dr
  \le C|z|^{2N}.
$$
Further, for $|z|>1$ let us split $J_1$ as
$$
  J_1=J_{11}+J_{12}+J_{13},
$$
where  $J_{11}$ is integral over
$$
\Pi_1=\{(\rho_1,\rho_2)\in\Pi_+: |r|<1/|z|^{4/3}\},
$$
$J_{12}$ is integral over
$$
\Pi_2=\{(\rho_1,\rho_2)\in\Pi_+\setminus\Pi_1: |\rho_1-|\om||<1/|z|^{8/3}\},
$$
and $J_{13}$ is integral over
$$
\Pi_3=\Pi_+\setminus(\Pi_1\cup\Pi_2)
$$
(see Picture 1).
\begin{figure}  %[!ht]
\vspace{-4cm}   %{-6cm}
\begin{center}
\hspace*{1.8cm}
\includegraphics[width=15cm]{hyp.eps}  %[width=1.20\textwidth]{hyp1.eps}
\end{center}
\caption{Case $|\om|-1/|z|^{8/3}<0$.}
\label{Picture1}
\end{figure}

By (\ref{HN-est}) and the inequality $|\rho_1-\om|\ge|\rho_1-|\om||$
we get
\begin{align*}
  |J_{11}|&\le C |z|^{2N+2}
  \int\limits_{\Pi_1}\frac{|\rho_1-|\om||\sqrt{r+\rho_1}d\rho_1 d\rho_2}
  {|\rho_1-\om|r}
  \le C |z|^{2N+2}
  \int\limits_{\Pi_1}\frac{\sqrt{r+\rho_1}d\rho_1 d\rho_2}{r}
\\
&\le C |z|^{2N+2}\int\limits_0^{\pi/2}d\psi\int\limits_0^{1/|z|^{4/3}}\!\!\!
  \sqrt{r+r\cos\psi}~dr
\\
&\le C |z|^{2N+2}\int\limits_0^{\pi/2}\cos\frac\psi 2d\psi
  \int\limits_0^{1/|z|^{4/3}}\sqrt rdr
  \le C |z|^{2N}.
\end{align*}
$$
  \le C |z|^{2N+2}\int\limits_0^{\pi/2}d\psi\int\limits_0^{1/|z|^{4/3}}\!\!\!
  \sqrt{r+r\cos\psi}~dr
  \le C |z|^{2N+2}\int\limits_0^{\pi/2}\cos\frac\psi 2d\psi
  \int\limits_0^{1/|z|^{4/3}}\sqrt rdr
  \le C |z|^{2N}.
$$
For the second integral we obtain similarly
\begin{align*}
  |J_{12}|&\le C |z|^{2N+2}\!\!\int\limits_{\Pi_2}\!\!\frac{\sqrt{r
  +\rho_1}d\rho_1 d\rho_2}{r}
\\
&\le C |z|^{2N+2}\!\!\int\limits_{\Pi_2}\frac{d\rho_1 d\rho_2}{\sqrt r}
  \le C |z|^{2N+2}|z|^{2/3}\frac{\de}{|z|^{8/3}}\le C |z|^{2N},
\end{align*}
since $1/\sqrt r\le |z|^{2/3}$ for $(\rho_1,\rho_2)\in\Pi_2$,
and $|\Pi_{2}|\le 2\de/|z|^{8/3}$. Finally, (\ref{HN-est}) implies
$$
  |J_{13}|\le C |z|^{2N}\int\limits_{\Pi_3}\frac{d\rho_1 d\rho_2}
  {|\rho_1-|\om||\sqrt{\rho_1^2+\rho_2^2}}\le C |z|^{2N}\ln^2|z|,
$$
since for any vertical interval $I\in\Pi_3$ we get
$$
  \int\limits_I\frac{d\rho_2}{\sqrt{\rho_1^2+\rho_2^2}}=\ln(\rho_2+\sqrt{\rho_1^2
  +\rho_2^2})\le C\ln|z|.
$$
%%%%%%%%%%%%%%%%%%%%%%%%%%%%%%%%%%%%%%%%%%%%%%%%%%%%%%%%%%%%%%%%%%%%%%%%%%%%%%%%%%%%%%%
%%%%%%%%%%%%%%%%%%%%%%%%%%%%%%%%%%%%%%%%%%%%%%%%%%%%%%%%%%%%%%%%%%%%%%%%%%%%%%%%%%%%%%
%%%%%%%%%%%%%%%%%%%%%%%%%%%%%%%%%%%%%%%%%%%%%%%%%%%%%%%%%%%%%%%%%%%%%%%%%%%%%%%%%%%%%%
\setcounter{equation}{0}
\section{Appendix C. Asymptotic completeness}
\label{Scat-sect}
%%%%%%%%%%%%%%%%%%%%%%%%%%%%%%%%%%%%%%%%%%%%%%%%%%%%%%%%%%%%%%%%%%%%%%%%%%%%%%%%%%%%%%
%%%%%%%%%%%%%%%%%%%%%%%%%%%%%%%%%%%%%%%%%%%%%%%%%%%%%%%%%%%%%%%%%%%%%%%%%%%%%%%%%%%%%%
We apply the obtained results to construct  the asymptotic scattering states.
Let $u_k$ be the eigenfunctions of operator $H$, corresponding  eigenvalues
$\mu_k$, and $U_0(t)$ be the dynamical group of free Schr\"odinger equation.

%%%%%%%%%%%%%%%%%%%%%%%%%%%%%%%%%%%%%%%%%%%%%%%%%%%%%%%%%%%%%%%%%%%%%%%%%%%%%%%%%%%%%
\begin{theorem}
i) Let $\sigma>11/2$ and $\psi_0\in l^2_{\sigma}$. Then  for ``generic''  $V\in {\cal V}$
   for solution to \eqref{Schr} the following long time asymptotics hold
   \be\la{scat}
    \psi(\cdot,t)=\sum\limits_{k=1}^n C_k e^{-it\mu_k}u_k+U_0(t)\phi_\pm+r_\pm(t),
    \quad t\to\pm\infty,
  \ee
where  $\phi_\pm\in l^2$ are the corresponding scattering states, and
$$
\Vert r_\pm(t)\Vert_{l^2}={\cal O}(|t|^{-1/2})
$$
\end{theorem}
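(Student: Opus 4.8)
The plan is to split the evolution according to the spectral decomposition of $H$ and to treat the continuous part by Cook's method, feeding in the dispersive estimate of Theorem~\ref{end}. Write $P_c=I-\sum_{k=1}^n P_k$ for the projection onto the continuous spectral subspace. Then
\be
  e^{-itH}\psi_0=\sum_{k=1}^n e^{-it\mu_k}P_k\psi_0+e^{-itH}P_c\psi_0,
\ee
and since each $P_k$ projects onto the eigenspace spanned by $u_k$, the first sum is exactly $\sum_{k=1}^n C_ke^{-it\mu_k}u_k$ with $C_k=\langle u_k,\psi_0\rangle$. Thus it remains to produce $\phi_\pm\in l^2$ such that $e^{-itH}P_c\psi_0=U_0(t)\phi_\pm+r_\pm(t)$, where $U_0(t)=e^{-itH_0}$, with $\norm{r_\pm(t)}{l^2}={\cal O}(|t|^{-1/2})$.

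For the continuous part I would define the scattering data through the limit $\phi_\pm=\lim_{t\to\pm\infty}e^{itH_0}e^{-itH}P_c\psi_0$, computed by Duhamel. Since $H=H_0+V$,
\be
  \fr{d}{ds}\big(e^{isH_0}e^{-isH}\big)=-ie^{isH_0}Ve^{-isH},
\ee
so that
\be
  e^{itH_0}e^{-itH}P_c\psi_0=P_c\psi_0-i\int_0^t e^{isH_0}Ve^{-isH}P_c\psi_0\,ds.
\ee
The candidate scattering state is then $\phi_\pm=P_c\psi_0-i\int_0^{\pm\infty}e^{isH_0}Ve^{-isH}P_c\psi_0\,ds$, provided the integral converges in $l^2$.

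Convergence and the decay rate both follow from estimating the integrand in $l^2$. Here the finite support of $V$ is essential: as a multiplication operator it maps $l^2_{-\sigma}$ boundedly into $l^2$, because $(1+x^2)^{\sigma/2}$ is bounded on $\supp V$. Combining this with the identity $e^{-isH}P_c=e^{-isH}-\sum_j e^{-is\mu_j}P_j$ and Theorem~\ref{end} (valid for $|s|\to\infty$ when $\sigma>11/2$), and using that $e^{isH_0}$ is unitary on $l^2$, I obtain
\be
  \norm{e^{isH_0}Ve^{-isH}P_c\psi_0}{l^2}\le C\norm{e^{-isH}P_c\psi_0}{l^2_{-\sigma}}
  \le C|s|^{-3/2}\norm{\psi_0}{l^2_\sigma}.
\ee
The integrand is bounded near $s=0$ and ${\cal O}(|s|^{-3/2})$ at infinity, hence integrable, so $\phi_\pm\in l^2$ exists. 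Setting $r_\pm(t)=e^{-itH}P_c\psi_0-U_0(t)\phi_\pm$ and applying the unitary $e^{-itH_0}$ gives
\be
  \norm{r_\pm(t)}{l^2}=\Big\Vert\,i\!\int_t^{\pm\infty}e^{isH_0}Ve^{-isH}P_c\psi_0\,ds\Big\Vert_{l^2}
  \le C\!\int_{|t|}^{\infty}\!s^{-3/2}\,ds=C'|t|^{-1/2},
\ee
which is the claimed bound, for both $t\to+\infty$ and $t\to-\infty$.

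The argument is essentially Cook's method, so there is no deep obstacle once Theorem~\ref{end} is in hand; the one point requiring care is the conversion of the weighted dispersive decay into an unweighted $l^2$ bound on the Duhamel integrand, which is exactly where the finite support of $V$ enters. One should also note that $\sigma>11/2$ is precisely the regularity needed to invoke Theorem~\ref{end}, and that the loss from integrating $|s|^{-3/2}$ explains the $|t|^{-1/2}$ rate (rather than the $t^{-3/2}$ rate of the dispersive estimate itself).
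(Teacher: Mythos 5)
Your proposal is correct and follows essentially the same route as the paper: the paper also projects onto the continuous spectrum, writes the Duhamel representation $P^c\psi(t)=U_0(t)\bigl(P^c\psi_0+\int_0^\infty U_0(-\tau)VP^c\psi(\tau)\,d\tau\bigr)-\int_t^\infty U_0(t-\tau)VP^c\psi(\tau)\,d\tau$ (your Cook's-method identity is this same formula rearranged), and then uses exactly your three ingredients -- unitarity of $U_0$, the bound $\Vert V\varphi\Vert_{l^2}\le C\Vert\varphi\Vert_{l^2_{-\sigma}}$ from the finite support of $V$, and the dispersive estimate \eqref{full2} -- to get convergence of the improper integral and the tail bound $\int_{|t|}^\infty\tau^{-3/2}d\tau={\cal O}(|t|^{-1/2})$. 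Your handling of the weighted-to-unweighted conversion via the identity $e^{-isH}P_c=e^{-isH}-\sum_j e^{-is\mu_j}P_j$ is, if anything, slightly cleaner than the paper's appeal to $P^c=I-P^d$ and exponential decay of the eigenfunctions.
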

%%%%%%%%%%%%%%%%%%%%%%%%%%%%%%%%%%%%%%%%%%%%%%%%%%%%%%%%%%%%%%%%%%%%%%%%%%%%%%%%%%%%%
\begin{proof}
For concreteness we  consider the case $t\to +\infty$.
Let us apply the projector $P^c$
onto the continuous spectrum of the operator $H$ to both sides of \eqref{Schr}:
\be\la{proj}
    iP^c\dot\psi=P^c H\psi=H_0 P^c\psi+V P^c\psi
\ee
 since $P^c$ and $H$ commute.
 Applying  the Duhamel representation to equation \eqref{proj} we obtain
  \be\la{Dug}
    P^c\psi(t)= U_0(t)P^c\psi(0)+\int\limits_0^t U_0(t-\tau)VP^c\psi(\tau)d\tau, ~~~~t\in\R.
  \ee
  We can rewrite  \eqref{Dug} as
  \begin{align*}
    P^c\psi(t)&= U_0(t)\Big(P^c\psi(0)+\int\limits_0^{\infty}
    U_0(-\tau)VP^c\psi(\tau)d\tau\Big)
\\
&-\int\limits_t^{\infty} U_0(t-\tau)VP^c\psi(\tau)d\tau
    =U_0(t)\phi+r_+(t).
\end{align*}
Let us show  that the  integrals  converge,
  and the function
  $$
  \phi_+=P^c\psi(0)+\int\limits_0^{\infty} U_0(-\tau)VP^c\psi(\tau)d\tau
  $$
  belongs to $l^2$. Indeed,  \eqref{full2} implies
  \beqn\nonumber
    \int\limits_0^{\infty}\Vert U_0(-\tau)VP^c\psi(\tau)\Vert_{l^2}d\tau
    &=& \int\limits_0^{\infty}\Vert VP^c\psi(\tau)\Vert_{l^2}d\tau
    \le C\int\limits_0^{\infty}\Vert P^c\psi(\tau)\Vert_{l^2_{-\sigma}}d\tau\\
    \nonumber
    &\le& C\int\limits_0^{\infty}t^{-3/2}\Vert \psi(0)\Vert_{l^2_{\sigma}}d\tau
    \le C
  \eeqn
Here we used  the unitarity of $U_0(t)$ in  $l^2$ and the identity
  $P^c=I-P^d$, where $P^d$ is the projector onto the discrete  spectrum, which consists
  of the exponentially decreasing functions.
  The estimate for $r_+(t)$ follows similarly.
\end{proof}
For the Klein-Gordon equation the asymptotics of type (\ref{scat}) also hold.
%%%%%%%%%%%%%%%%%%%%%%%%%%%%%%%%%%%%%%%%%%%%%%%%%%%%%%%%%%%%%%%%%%%%%%%%%%%%%%%%%%%

\end{document}